\newcolumntype{R}[1]{>{\raggedleft\arraybackslash }b{#1}}
\newcolumntype{L}[1]{>{\raggedright\arraybackslash }b{#1}}
\newcolumntype{C}[1]{>{\centering\arraybackslash }b{#1}}
\newcommand{\Tr}{\mathrm{Tr}}
\newtheorem{definition}{Definition}
\newtheorem{proposition}{Proposition}
\newtheorem{remark}{Remark}
\newcommand{\sym}{\mathrm{Sym}}
\newcolumntype{R}[1]{>{\raggedleft\arraybackslash }b{#1}}
\newcolumntype{L}[1]{>{\raggedright\arraybackslash }b{#1}}
\newcolumntype{C}[1]{>{\centering\arraybackslash }b{#1}}
\newcommand{\cF}{{\mathcal F}}
\newcommand{\cG}{{\mathcal G}}
\newcommand{\cL}{{\mathcal L}}
\newcommand{\beq}{\begin{equation}}
\newcommand{\eeq}{\end{equation}}
\newcommand{\bea}{\begin{eqnarray}}
\newcommand{\eea}{\end{eqnarray}}
\definecolor{mygray}{gray}{0.3}
\newcommand{\bes}{\begin{eqnarray}}
\newcommand{\ees}{\end{eqnarray}}
\newcommand\restr[2]{{
  \left.\kern-\nulldelimiterspace 
  #1 
  \vphantom{\big|} 
  \right|_{#2} 
  }}
\begin{document}
\begin{flushleft}
Progress in Group Field Theory   
and \\
Related Quantum Gravity Formalism \\
\end{flushleft}
\begin{center}
\textbf{\Large{
Progress in  solving  nonperturbative  renormalization 
group  for tensorial group field theory }}
\vspace{15pt}

{\large Vincent Lahoche$^a$\footnote{vincent.lahoche@th.u-psud.fr},   and Dine Ousmane Samary$^{a,b}$\footnote{dine.ousmanesamary@cipma.uac.bj}}
\vspace{0.5cm}

a)\,Commissariat à l'\'Energie Atomique (CEA, LIST),
 8 Avenue de la Vauve, 91120 Palaiseau, France

c)\,  Facult\'e des Sciences et Techniques/ ICMPA-UNESCO Chair, Universit\'e d'Abomey-
Calavi, 072 BP 50, Benin


\end{center}

\vspace{5pt}
\begin{abstract}
\noindent
This manuscript  aims at giving new advances on the functional renormalization group applied to the tensorial group field theory. It is based on a series of our three papers [arXiv:1803.09902],  [arXiv:1809.00247] and [arXiv:1809.06081]. We consider the polynomial Abelian $U(1)^d$ models without closure constraint. More specifically, we discuss  the case of the quartic melonic  interaction. We present a new approach, namely the effective vertex expansion method, to solve the exact Wetterich flow equation, and investigate the resulting flow equations, specially regarding the existence of non-Gaussian fixed points for their connection with phase transitions. To complete this method, we consider a non-trivial constraint arising from the Ward-Takahashi identities, and discuss the disappearance of the global non-trivial fixed points taking into account this constraint. Finally, we argue in favor of an alternative scenario involving a first order phase transition into the reduced phase space given by the Ward constraint. 
\end{abstract}

\setcounter{tocdepth}{2}
\section{Introduction}
In seeking a theory to unify modern physics,  i.e. a well defined theory of quantum gravity, numerous contributions have been made. Despite the fact that none of them has given a complete resolution to the problem, several major advances have been observed. In the number of these advances, we count the very recent propositions such as loop quantum gravity \cite{Rovelli:1997yv}-\cite{Rovelli:1998gg}, dynamical triangulation \cite{Ambjorn:1992eh}-\cite{Ambjorn:2013tki}, noncommutative geometry \cite{Connes:1990qp}-\cite{Aastrup:2006ib}, group field theories (GFTs) \cite{Oriti:2006ar}-\cite{Oriti:2014yla}  and  tensors models (TMs) \cite{Gurau:2009tw}-\cite{Gurau:2013pca}. These approaches  are considered as new background independent approaches according to several theoreticians.
GFTs are quantum field theories over the group manifolds and  are considered as the second quantization version of loop quantum gravity \cite{Oriti:2014yla}.  These theories are caracterized by  the specific form of non-locality in their interactions. TMs, especially colored ones, allow one to define probability measures on simplicial pseudo-manifolds such that the tensor of rank $d$ represents a $(d-1)$-simplex.  TMs admit the large $N$-limit ($N$ is the size of the tensor) dominated by the graphs called melons, thanks to the Gurau breakthrough \cite{Gurau:2011xq}-\cite{Gurau:2013pca}. The large $N$-limit or the  leading order encodes a sum over a class of colored triangulations of the $D$-sphere and its  behaviour is a powerful tool which allows us to understand the continuous limit of these models through, for instance, the study of critical exponents and phase transitions. TM and GFT are combined to give birth to a new class of field theories called tensorial group field theory. These class of field models enjoy renormalization and asymptotic freedom \cite{Carrozza:2012uv}-\cite{Carrozza:2014rba}. Using the functional renormalization group (FRG) method, it is also possible  to identify  the equivalent of Wilson-Fisher fixed point for some particular cases of models. 

There are several ways to introduce the FRG in field theories. The first approach is the one pioneered by Wilson, simple and intuitive and therefore yields a powerful way to think about quantum field theories \cite{Wilson}. This method allows a smooth interpolation between the known microscopic laws IR-regime and the complicated macroscopic phenomena in physical systems UV-regime and is constructed with the incomplete integration as cutoff procedure. Well after Polchinski provided a new approach called Wilson-Polchinski FRG equation  \cite{Polchinshi} to address the same question inspired from the Wilson's method. This very practicable method, which may be integrated with an arbitrary cutoff function and expanded up to the next to leading order of the derivative expansion. Despite the fact that all these approaches seem to be nonperturbative, in practice, the perturbative solution has appeared more attractive.  More recently the so called Wetterich flow equation \cite{Wetterich:1992yh}, is proposed to study the nonperturbative FRG and this study requires  approximations or truncations and numerical analysis which is not very well controlled. The FRG equation allows to determine the fixed points and probably the phase transition. These phase transitions in the case of TGFT models may help to identify the emergence of general relativity and quantum mechanics through the pregeometrogenesis scenario \cite{Oriti:2018dsg}-\cite{Wilkinson:2015fja}. Indeed, the way the quantum
degrees of freedom are organized to shape a geometric structure which can be identified with a semi-classical space-time is one of the challenges for GFT approach. In the geometrogenesis point of view, the standard space-time geometry is understood as an emergent property, the scenario leading to this geometric limit being assumed quite closed to Bose Einstein condensation in condensed matter physics.

In the recent works \cite{Lahoche:2018ggd}-\cite{Lahoche:2018vun} the effective vertex expansion method is used in the context of the FRG. This leads to the definition of new class of   equations called structure equations that help to solve the Wetterich flow equations. Taking into account the leading order contribution in the symmetric phase,  the non-perturbative regime without truncation can be studied. The Ward-Takahashi (WT) identities is also derived and become a constraint along the flow. Note that the WT-identities are universal for all field theories having a symmetry, and are not specific to TGFT. Therefore all the fixed points must belong inside to the domain of this constraint line, before being considered as an acceptable fixed points. In the case of quartic melonic TGFT models it has been shown that the fixed point occurring from the solution of Wetterich equation violates this constraint for any choice of the regulator function. This violation is also independent of the method used to find this fixed point, whether it is the truncation, or the EVE method. This point will be discussed carefully in this note. Let us remark that most of the TGFT models  previously studied in literature are showed to admit at least a non trivial fixed point and therefore a phase transition. 
The phase transitions are very useful to the likely emergence of metric and are linked to the existence of fixed points, which becomes unavoidable  in the search of models which may be probably describe our universe after geometrogenesis scenario.  However, in this paper, we study  the quartic  $T^4$-TGFT  models and prove that no fixed points can be found. First of all we considered the Wilson-Polchinski renormalization group method and show the weakness of this method  in the nonperturbative regime.  Then we consider  the nonperturbative Wetterich flow equation from which the nonperturbative analysis can be made by an approximation on the average effective action called truncation. The EVE method is used to get around the approximation and therefore solves the flow without truncation. The set of Ward-Takahashi identities and structure equations are derived to provide a nontrivial constraint on the reliability of the approximation schemes, i.e. the truncation and the choice of the regulator.

The paper is organized as follows: In section \eqref{sec2} we recall the FGR method by Wilson-Polchinski and apply it in the context of TGFT.  Despite the efficiencies of this method, we will present some questions that arise, in the search of a  nonperturbative solution and then we will go further in the Wetterich flow equation. Section \eqref{sec3} is dedicated to the description of the Wetterich flow equation and the corresponding solution when the truncation method is applied. We also show that the only nontrivial fixed point which comes from the solution of the flows,
violates the Ward identities.  In the section \eqref{sec4}, we perform new nonperturbative analysis using the so called structure equations is given and the solution of the flow equations are also derived.  In the last section \eqref{sec5} we provide a discussion and conclusion to our work.

\section{Introduction to the nonperturbative renormalization  for TGFT}\label{sec2}
FRG is a powerful ingredient to think about when it comes to quantum field theories. Generally, in every situation where the scale belong to a range of correlated variables, the theory  may be treated by the RG. The first conceptual framework is Wilson's version of the RG which, by Polchinski, may be applied in the case of quantum field theory.
In this section we discuss the nonperturbative renormalization group using not only the Wilson-Polchinski equation but also the Wetterich flow equation. We discuss each method and  consider the Wetterich flow equation as more suitable for the treatment of FRG applied to TGFT.
Thanks to the Wilson method, the renormalization and renormalization group are understood as a \textit{coarse-graining} process from a microscopic theory toward an effective long-distance theory. There are in fact different implementations of this idea, depending on the context. 
In the context of TGFT, we consider the pair of complex fields $\phi$ and $\bar\phi$  which takes values of $d$-copies of arbitrary group $G$:
\bea
\phi,\bar \phi: G^d\rightarrow \mathbb{C}.
\eea
In a particular case we assume that $G=U(1)$ is an Abelian compact Lie group. For the rest we only consider the Fourier transform of the fields $\phi$ and $\bar\phi$ denoted by $T_{\vec p}$ and $\bar{T}_{\vec p}$ respectively,  $\vec p\in \mathbb{Z}^d$ written as (for $\vec g\in U(1)^{d}$, $g_{j}=e^{i\theta_{j}}$):
\bea
\phi(\vec \theta\,)=\sum_{\vec p\in\mathbb{Z}^{d}}\,T_{\vec p} \,e^{i\sum_{j=1}^d\theta_{j}p_{j}},\quad
\bar\phi(\vec \theta\,)=\sum_{\vec p\in\mathbb{Z}^{d}}\,\bar T_{\vec p}\, e^{-i\sum_{j=1}^d\theta_{j}p_{j}}.
\eea
The description of the statistical field theory is given by the partition function $\mathcal Z[J,\bar J]$:
\bea\label{path}
\mathcal Z[J,\bar J]=\int d\mu_{C}\, e^{-S_{int}+\langle J,\bar T\rangle+\langle T,\bar J \rangle},
\eea
where $S_{int}$ is the interaction functional action assumed to be tensor invariant, $J$, $\bar J$ the external currents and $\langle J,\bar T\rangle$ a shorthand notation for
\begin{equation}
\langle J,\bar T\rangle:= \sum_{\vec{p}} J_{\vec{p}} \bar{T}_{\vec{p}}\,.
\end{equation}
The Gaussian measure $d\mu_{C}$ is then fixed with the choice of the covariance $C$. In this paper, we adopt a Laplacian-type propagator of the form: 
\bea\label{propa1}
C(\vec p\,)=\frac{1}{\vec p\,^2+m^2}=\int\,d\mu_C \,T_{\vec p}\,\bar T_{\vec p} \,.
\eea
In order to prevent the UV divergences and supress the high momenta contributions, the propagator \eqref{propa1} has to be regularized. In usual  case the Schwinger regularization is used:
\bea\label{propa2}
C_\Lambda(\vec p\,)=\frac{e^{-(\vec p\,^2+m^2)/\Lambda^2}}{\vec p\,^2+m^2}.
\eea
In general case, by defining  the function $\vartheta(t)$  such  that the condition $|1-\vartheta(t)|\leq Ce^{-\kappa t}$ is fatisfied for $C,\kappa>0$ and $t\rightarrow +\infty$, we can write the propagator as a Laplace transform:
\begin{equation}\label{regularization}
C_{\Lambda}(\vec{p})= \int_{0}^{+\infty}\, dt\, \vartheta(t\Lambda^2)\,e^{-t(\vec{p}^{\,2}+m^2)}.
\end{equation}
Then, we shall make the simplest choice $\vartheta(t)=\Theta(t-1)$, where $\Theta(t)$ is the Heaviside function, in order to recover the Schwinger regularization \eqref{propa2}. For the rest we keep mind that the propagator is regularized and the infinite limit will be given in an appropriate way. In this case the following result in well satisfied:
\begin{proposition}\label{prop1}
Let us consider two non-normalized Gaussian measures $d\mu_{C}$ and $d\mu_{C'}$ 
whose covariances $ C$ and $C' $ are related by  $C'=C+\Delta$ and such that $C$, $C'$ and $\Delta $ are  assumed to be  positive. Then we get the following relation:
\begin{align}
\int d\mu_{C}(\bar{T}_1,T_1)d\mu_{\Delta}(\bar{T}_2,T_2)e^{-S_{int}(T_1+T_2, \bar{T}_1+\bar{T}_2)}=\left(\dfrac{\det(\Delta C)}{\det(C')}\right)^{1/2}\int d\mu_{C'}(\bar{T},T)e^{-S_{int}(T, \bar{T})},
\end{align}
where $T=T_1+T_2$ and $\bar T=\bar T_1+\bar T_2$.
\end{proposition}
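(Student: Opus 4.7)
The statement is a standard convolution property of Gaussian measures, which I would establish by an explicit linear change of variables in the joint integral, followed by completion of the square. Under the non-normalized convention, $d\mu_C$ has density proportional to $\exp(-\langle \bar T_1, C^{-1} T_1\rangle)$ and similarly for $d\mu_\Delta$, so the left-hand side becomes a Gaussian integral over $(T_1, \bar T_1, T_2, \bar T_2)$ with quadratic form $\langle \bar T_1, C^{-1} T_1\rangle + \langle \bar T_2, \Delta^{-1} T_2\rangle$, weighted by $\exp(-S_{int}(T_1+T_2, \bar T_1+\bar T_2))$. I would then perform the change $T := T_1+T_2$, $V := T_2$ (together with the conjugate pair), whose Jacobian is unity and which makes the interaction depend only on $T$ and $\bar T$.

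The core calculation is completion of the square in $V$. Setting $M := C^{-1}+\Delta^{-1}$ and $V_\star := M^{-1} C^{-1} T$, a direct expansion gives $\langle \bar T_1, C^{-1} T_1\rangle + \langle \bar T_2, \Delta^{-1} T_2\rangle = \langle \bar T,\,(C^{-1} - C^{-1} M^{-1} C^{-1})\, T\rangle + \langle \bar W, M W\rangle$ with $W := V - V_\star$. From $C' = C + \Delta$ one factors $M = C^{-1} C' \Delta^{-1}$, hence $M^{-1} = \Delta (C')^{-1} C$, and therefore $C^{-1} - C^{-1} M^{-1} C^{-1} = (C')^{-1}$. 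The quadratic form in $T$ is thus exactly the one defining $d\mu_{C'}$.

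Finally, because $S_{int}$ is $W$-independent, the Gaussian integral over $W$ factorizes and yields a pure determinantal constant. Using $\det M = \det(C')/\det(\Delta C)$ together with the non-normalized convention on the complex pair, this constant reproduces the stated prefactor $(\det(\Delta C)/\det(C'))^{1/2}$, and what remains is precisely $\int d\mu_{C'}(\bar T, T)\, e^{-S_{int}(T, \bar T)}$. The only substantive point is the operator identity $C^{-1} - C^{-1} M^{-1} C^{-1} = (C')^{-1}$, which requires the positivity (hence invertibility) of $C$, $\Delta$, $C'$ assumed in the statement; it is manifestly clean in the momentum representation used throughout the paper, where $C$ and $\Delta$ are scalar functions of $\vec p$ and automatically commute. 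Everything else is routine bookkeeping of a unit-Jacobian substitution.
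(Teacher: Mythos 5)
Your proof is correct, but it follows a genuinely different route from the paper. You argue directly: substitute the densities, change variables to $T=T_1+T_2$, $V=T_2$ (unit Jacobian), complete the square in $V$, and use the operator identity $C^{-1}-C^{-1}M^{-1}C^{-1}=(C')^{-1}$ with $M=C^{-1}+\Delta^{-1}$ — an identity which, as you can check from $M=C^{-1}C'\Delta^{-1}$, holds even without assuming that $C$ and $\Delta$ commute, so the appeal to the momentum representation is a convenience rather than a necessity. The paper instead proves the statement through generating functionals: it writes the Gaussian density and uses
\begin{equation*}
\int d\mu_{C'}(T,\bar T)\, e^{-\langle J,\bar T\rangle-\langle T,\bar J\rangle}=e^{\langle J,C'\bar J\rangle}=e^{\langle J,C\bar J\rangle}\,e^{\langle J,\Delta\bar J\rangle},
\end{equation*}
i.e.\ the convolution property of Gaussian measures read off from the factorization of their Laplace transforms, the interaction term being handled by expanding $e^{-S_{int}}$ in the sources. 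What your approach buys is an explicit, self-contained derivation in which the prefactor appears mechanically from $\det M^{-1}=\det(\Delta C)/\det(C')$; what the paper's approach buys is brevity and independence from any completion-of-square algebra, at the price of implicitly invoking that equal generating functionals identify the measure. One small caveat on your side: the determinant bookkeeping for the complex pair (the powers of $\pi$ and the exponent $1/2$ versus $1$ in the non-normalized convention) is only formally consistent, but this looseness is shared by the paper's own statement and proof, so your argument matches the intended level of rigor.
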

\begin{proof}
The proof of this formula can simply be given using the definition of the Gaussian measure $d\mu_C$ with mean zero and covariance matrix $C$ as
\bea
d\mu_C=\det(\pi C)^{-\frac{1}{2}}e^{-\langle T,C^{-1}\bar T\rangle} dT\,d\bar T.
\eea
and the fact that
\bea
\int\, d\mu_{C'}(T,\bar T) \,e^{-\langle J,\bar T\rangle-\langle T,\bar J\rangle}&=&e^{\langle J,C' \bar J\rangle}=e^{\langle J,C \bar J\rangle}e^{\langle J,\Delta \bar J\rangle}.
\eea
\end{proof}

We introduce tensorial unitary invariants, or simply tensorial  invariants.  An invariant is a polynomial  $P(T,\bar T)$ in the tensor entries $T_{\vec p}$ and $\bar T_{\vec p}$ which is invariant under the following action of $U(N)^{\otimes d}$ ($N$ being the size of the tensors): 
\bea
T_{\vec p}\rightarrow \sum_{\vec q}\, U^{(1)}_{p_1q_1}\cdots U^{(d)}_{p_dq_d} T_{\vec q},\quad \bar T_{\vec p}\rightarrow \sum_{\vec q}\, \bar U^{(1)}_{p_1q_1}\cdots \bar U^{(d)}_{p_dq_d} \bar T_{\vec q}
\eea
The algebra of invariant polynomials is generated by a set of polynomials labelled as bubbles.
A bubble is a connected, bipartite graph, regular of degree
$d$, whose edges must be colored with a color belonging to the set 
$\{1,\cdots, d\}$,  and such that all $d$ colors are incident at each vertex (and is incident to exactly once). Examples of bubbles are displayed in Fig.~\ref{fig:Bubbles}. 
\begin{figure}\centering
\begin{minipage}[b]{45mm}
\includegraphics[scale=.7]{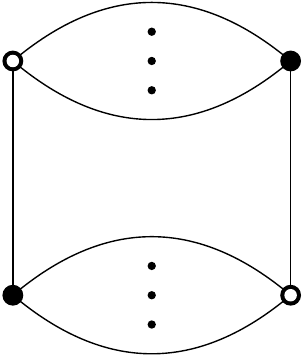}
\end{minipage}
\caption{The 4-vertex bubble   from  which  the dots indicate multiple edges.}\label{fig:Bubbles}
\end{figure}

In this paper we consider the quartic melonic  $T_5^4$ model which is proved to be renormalizable  in all orders  in the perturbative theory.  The interaction   of this  model  takin into account the leading order contributions: (melon) is  written graphically as:
\bea\label{s4}
S_{int}=
\lambda_{41}\sum_{i=1}^5\,\vcenter{\hbox{\includegraphics[scale=0.8]{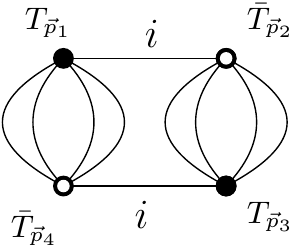} }}
\eea
Note that the interaction \eqref{s4} is  invariant under the unitary transformations ${\bf U}\in U^{\otimes d}$. In contrast, it is not the case for the kinetic terms and sources terms due to the non-trivial propagator and sources $J$ and $\bar J$. This implies the existence of a non-trivial Ward-identity which becomes a strong constraint and will be taking into account in the FRG point of view.

\subsection{Wilson-Polchinski equation}
In this subsection we discuss the Wilson-Polchinski RG equation and provide the corresponding solutions of the quartic melonic TGFT.
 For this let us  introduce a dilatation parameter $s<1$. This parameter will be used as an evolution parameter in the integration around the UV modes. The RG idea is that if we want to describe the phenomena at scales down to $s$, then we should be able to use the set of variables defined at the scale $s$.  Indeed, define the variation
\begin{align}
\Delta_{s,\Lambda}(\vec{p})&:=C_{\Lambda}(\vec{p})-C_{s\Lambda}(\vec{p})\\\nonumber
&\,=\int_{0}^{+\infty} dt \int_{s^2}^{1}dx\dfrac{d}{dx}\vartheta(tx\Lambda^2)e^{-t(\vec{p}^{\,2}+m^2)}.
\end{align}
In the case where $s$ is closed to $1$, denoting by $D_{s,\Lambda}(\vec{p})$ the infinitesimal version of the above variation, we get:
\begin{align}
\Delta_{s,\Lambda}(\vec{p})&\simeq \dfrac{2(1-s)}{\Lambda^2}e^{-(\vec{p}^{\,2}+m^2)/\Lambda^2}=:(1-s)D_{s,\Lambda}(\vec{p})\,,
\end{align}
such that the partition function can be written as an integral over two fields, respectively associated to the ``slow'' and ``rapid'' modes. Starting with the partition function $\mathcal{Z}_{\Lambda}$ at scale $\Lambda$, we get
\begin{equation}
\mathcal{Z}_{\Lambda}[S_{int}]:=\int d\mu_{C_{\Lambda}}(\bar{T},T)e^{-S_{int,\Lambda}(T, \bar{T})}.
\end{equation}
The proposition \eqref{prop1} allows us to decompose $\mathcal{Z}_{\Lambda}[S_{int}]$ into two Gaussian integrals over two fields, $T_>$ and $T_<$, corresponding respectively to the ``rapid'' and ``slow'' modes,  with covariances $\Delta_{s,\Lambda}$ and $C_{s\Lambda}$:
\begin{align}
\mathcal{Z}_{\Lambda}[S_{int}]=&\left(\dfrac{\det(\Delta_{s,\Lambda} C_{s\Lambda})}{\det(C_{\Lambda})}\right)^{-1/2}\int d\mu_{C_{s\Lambda}}(\bar{T}_<,T_<)\label{decomp}\int d\mu_{\Delta_{s,\Lambda}}(\bar{T}_>,T_>)e^{-S_{int}(T_<+\bar{T}_>, \bar{T}_<+\bar{T}_>)}\,.
\end{align}
Then, identifying the effective action $S_{int,s\Lambda}$ at scale $s\Lambda$ as:
\begin{align}\label{effectiveaction}
&e^{-S_{int,s\Lambda}(T_<,\bar{T}_<)}:=\frac{1}{\sqrt{\det \Delta_{s,\Lambda}}}\int d\mu_{\Delta_{s,\Lambda}}(\bar{T}_>,T_>)e^{-S_{int}(T_<+T_>, \bar{T}_<+\bar{T}_>)}\,,
\end{align}
and the decomposition \ref{decomp} becomes:
\begin{equation}\label{effectiveaction2}
\mathcal{Z}_{\Lambda}=\left(\dfrac{\det C_{s\Lambda}}{\det C_{\Lambda}}\right)^{-1/2}\int d\mu_{C_{s\Lambda}}(\bar{T}_<,T_<)e^{-S_{int,s\Lambda}(T_<,\bar{T}_<)}\,.
\end{equation}
Now, for an infinitesimal step, keeping only the leading order terms in $1-s$ when $s$ is very close to $1$, we find:
\begin{align}
&\qquad e^{-\Delta S_{int,\Lambda}(T_<,\bar{T}_<)}\label{bubu}=1-\Tr\Big[\Big(\frac{\delta^2 S_{int,\Lambda}}{\delta T \delta \bar{T}}-\frac{\delta S_{int,\Lambda}}{\delta T} \frac{\delta S_{int,\Lambda}}{\delta \bar{T}}\Big)\Delta_{s,\Lambda}\Big]+\mathcal{O}(1-s),
\end{align}
with $\Delta S_{int,\Lambda}(T_<,\bar{T}_<):=S_{int,s\Lambda}(T_<,\bar{T}_<)-S_{int\Lambda}(T_<,\bar{T}_<)$. At the same time, expanding the left hand side of \ref{effectiveaction2} in powers of $1-s$, and identifying the power of $1-s$ leads to :
\begin{equation}
\dfrac{dS_{int,s\Lambda}}{ds}=-\Tr\bigg\{\Big(\frac{\delta^2 S_{int,s\Lambda}}{\delta T \delta \bar{T}}-\frac{\delta S_{int,s\Lambda}}{\delta T} \frac{\delta S_{int,s\Lambda}}{\delta \bar{T}}\Big)D_{s,\Lambda}\bigg\}\label{eqflow}.
\end{equation}
Graphically this equation is given by (and is considered as the Wilson-Polchinski RG equation):
\bea\label{eqflow2new}
\frac{d}{ds} \vcenter{\hbox{\includegraphics[scale=0.9]{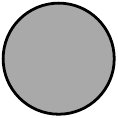} }}=\Tr\Bigg[\vcenter{\hbox{\includegraphics[scale=0.9]{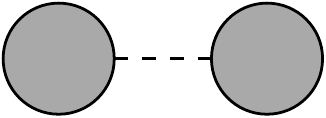} }}-\vcenter{\hbox{\includegraphics[scale=0.9]{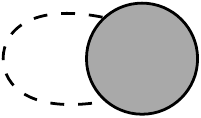} }}\Bigg].
\eea
Note that we may consider $\Lambda$ not only  as a fundamental scale, but also as an arbitrary step on the flow, meaning that the equation \ref{eqflow} holds at each step of the flow. Physically,  equation \ref{eqflow} explains how the couplings are affected when the fundamental scale changes, and is therefore the one pioneered  idea  of the renormalization group flow  firstly given by Wilson.  This approach follows from a remarkably simple and intuitive idea and yields a very powerful way to think about quantum field theories. The relation \eqref{eqflow2new}  can be also expanded in the following result:
\begin{proposition}
The set of Wilson-Polchinski renormalization group equations are given by
\bea\label{WPnew}
\dfrac{d\mathcal{V}^{(n_l)}}{ds}=-\sum_{\vec{p}\vec{\bar{p}}}D_{s,\Lambda,\vec{p}\vec{\bar{p}}}\frac{\partial}{\partial \bar{T}_{\vec{p}}}\frac{\partial}{\partial T_{\vec{\bar{p}}}}\mathcal{V}^{(n_l+1)}+\sum_{n_m=0}^{n_l-1}\sum_{\vec{p}\vec{\bar{p}}}D_{s,\Lambda,\vec{p}\vec{\bar{p}}}\frac{\partial\mathcal{V}^{(n_m+1)}}{\partial \bar{T}_{\vec{p}}}\frac{\partial\mathcal{V}^{(n_{l}-n_m)}}{\partial T_{\vec{\bar{p}}}}-n_l\eta_s\mathcal{V}^{(n_l)}\,,
\eea
where $D_{s,\Lambda,\vec{p}\vec{\bar{p}}}=D_{s,\Lambda}(\vec{p})\delta_{\vec{p}\vec{\bar{p}}}$, $\eta_s:=\dfrac{d}{ds}\ln Z(s)$. In this formula we denote by $n_l$ the number of black and white nodes in each interactions and  we consider the following expansion for $S_{int,s\Lambda}[T,\bar{T}]$: 
\begin{align}
S_{int,s\Lambda}[T,\bar{T}]=\sum_{n_l}\mathcal{V}^{(n_l)}=\sum_{n_l}\sum_{\{\vec{p}_{i},\vec{\bar{p}}_{i}\}}\mathcal{V}^{(n_l)\,\vec{\bar{p}}_{1},...,\vec{\bar{p}}_{l}}_{\vec{p}_{1},...,\vec{p}_{l}}\prod_{i=1}^{l}T_{\vec{p}_{i}}\bar{T}_{\vec{\bar{p}}_{i}}.
\end{align}

\end{proposition}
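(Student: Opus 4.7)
The plan is to substitute the polynomial ansatz $S_{int,s\Lambda}[T,\bar{T}] = \sum_{n_l\geq 0}\mathcal{V}^{(n_l)}$ directly into the Wilson-Polchinski equation \eqref{eqflow} and to match contributions to $\tfrac{d\mathcal{V}^{(n_l)}}{ds}$ bi-degree by bi-degree in $(T,\bar T)$. Since $D_{s,\Lambda,\vec p\vec{\bar p}} = D_{s,\Lambda}(\vec p)\delta_{\vec p\vec{\bar p}}$ carries no field dependence, the full polynomial structure on the right-hand side is controlled entirely by the functional derivatives of $S_{int,s\Lambda}$. The key observation is that $\mathcal{V}^{(n_l)}$ carries exactly $n_l$ factors of $T$ and $n_l$ factors of $\bar T$, so each functional derivative strictly reduces one bi-degree by one.

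First, I would handle the linear (second-derivative) term $\Tr[\tfrac{\delta^2 S_{int,s\Lambda}}{\delta T\delta\bar T}D_{s,\Lambda}]$. For each $\mathcal{V}^{(a)}$, the object $\partial_{\bar T_{\vec p}}\partial_{T_{\vec{\bar p}}}\mathcal{V}^{(a)}$ has bi-degree $(a-1,a-1)$; after contracting with the diagonal kernel $D_{s,\Lambda,\vec p\vec{\bar p}}$ and summing over the common momentum, the result remains of bi-degree $(a-1,a-1)$. Setting $a-1=n_l$ selects $a=n_l+1$, reproducing the first term $-\sum_{\vec p\vec{\bar p}}D_{s,\Lambda,\vec p\vec{\bar p}}\partial_{\bar T_{\vec p}}\partial_{T_{\vec{\bar p}}}\mathcal{V}^{(n_l+1)}$.

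Next, for the quadratic term $\Tr[\tfrac{\delta S_{int,s\Lambda}}{\delta T}\tfrac{\delta S_{int,s\Lambda}}{\delta \bar T}D_{s,\Lambda}]$, inserting the expansion in both factors gives a double sum over pairs $(\mathcal{V}^{(n_m+1)},\mathcal{V}^{(n_k+1)})$. Since $\partial_{\bar T_{\vec p}}\mathcal{V}^{(n_m+1)}$ has bi-degree $(n_m+1,n_m)$ and $\partial_{T_{\vec{\bar p}}}\mathcal{V}^{(n_k+1)}$ has bi-degree $(n_k,n_k+1)$, the product has bi-degree $(n_m+n_k+1,n_m+n_k+1)$. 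Projecting on $(n_l,n_l)$ imposes $n_k=n_l-n_m-1$, collapsing the double sum to the single sum $\sum_{n_m=0}^{n_l-1}$ stated in the proposition.

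The remaining term $-n_l\eta_s\mathcal{V}^{(n_l)}$ does not come from \eqref{eqflow} as written: it arises from passing to the canonically normalized field. Along the flow, one absorbs the wave-function renormalization by $T\mapsto Z^{1/2}(s)\,T$, so that the bare couplings carry a prefactor $Z^{n_l}(s)$ in front of each $\mathcal{V}^{(n_l)}$. Differentiating $Z^{n_l}(s)\mathcal{V}^{(n_l)}$ with respect to $s$ and subtracting the bare flow produces precisely the shift $-n_l\eta_s\mathcal{V}^{(n_l)}$ with $\eta_s=\tfrac{d}{ds}\ln Z(s)$. The main obstacle I expect is purely combinatorial bookkeeping: one must check that no extra symmetry factors appear when projecting the quadratic trace on a fixed bi-degree, and that the sign and normalization of $\eta_s$ are consistent with the conventions used in the Gaussian measure and in Proposition~\ref{prop1}. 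Once these are verified, the statement follows by equating coefficients of each monomial class in $(T,\bar T)$.
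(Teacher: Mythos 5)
Your proposal is correct and takes essentially the same route as the paper: insert the monomial expansion into the Wilson--Polchinski equation \eqref{eqflow} and match bi-degrees, which gives the linear term from $\mathcal{V}^{(n_l+1)}$ and collapses the quadratic trace to the convolution sum over $n_m$. Your account of the $-n_l\eta_s\mathcal{V}^{(n_l)}$ term through the $Z^{n_l}(s)$ prefactor generated by the rescaling $T=Z^{1/2}(s)\tilde T$ is equivalent to the paper's step of first passing to the modified flow \eqref{eqflow2}, whose extra terms $-\tfrac{1}{2}\eta_s\big[\Tr\big(\tfrac{\delta S_{int,s\Lambda}}{\delta T}T\big)+\Tr\big(\bar T\tfrac{\delta S_{int,s\Lambda}}{\delta\bar T}\big)\big]$ act as Euler operators and reduce to exactly that shift on the homogeneous $\mathcal{V}^{(n_l)}$.
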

\begin{proof}
A pragmatic way to introduce field strength renormalization is the following. We consider a wave function  $Z(s)$ and the regularized field $T=Z(s)^{\frac{1}{2}}\tilde{T}$ at the scale $s\Lambda$. A new functional $\tilde{S}_{int,s\Lambda}$ is associated to this field such as $\tilde{S}_{int,s\Lambda}[\tilde{T},\bar{\tilde{T}}]=S_{int,s\Lambda}[{T},\bar{{T}}]$. The equation \ref{eqflow} is then modified into (we deleted  the tildes notation):
\begin{align}
\dfrac{dS_{int,s\Lambda}}{ds}=&-\Tr\bigg\{\Big(\frac{\delta^2 S_{int,s\Lambda}}{\delta T \delta \bar{T}}-\frac{\delta S_{int,s\Lambda}}{\delta T} \frac{\delta S_{int,s\Lambda}}{\delta \bar{T}}\Big)D_{s,\Lambda}\bigg\}\nonumber\\
&-\frac{1}{2}\eta_s\bigg[\Tr\Big(\frac{\delta S_{int,s\Lambda}}{\delta T}T\Big)+\Tr \Big(\bar{T}\frac{\delta S_{int,s\Lambda}}{\delta \bar{T}}\Big)\bigg].\label{eqflow2} 
\end{align}
Then, by  considering  the following expansion for $S_{int,s\Lambda}[T,\bar{T}]$: 
\begin{align}
S_{int,s\Lambda}[T,\bar{T}]=\sum_{n_l}\mathcal{V}^{(n_l)}=\sum_{n_l}\sum_{\{\vec{p}_{i},\vec{\bar{p}}_{i}\}}\mathcal{V}^{(n_l)\,\vec{\bar{p}}_{1},...,\vec{\bar{p}}_{l}}_{\vec{p}_{1},...,\vec{p}_{l}}\prod_{i=1}^{l}T_{\vec{p}_{i}}\bar{T}_{\vec{\bar{p}}_{i}},
\end{align} 
we get the relation \eqref{WPnew}.
 \end{proof}
The Wilson-Polchinski  equation is a leading order equation in the perturbation rather than the loop expansion.
Note that we  can show that this equation can be turned into a Fokker-Planck equation and therefore may be formally solved by a standard method. The rest of this section is devoted to a perturbative analysis of the flow equations.
Before starting this computation, we have to precise the approximation regime. We shall consider only the UV limit which corresponds to the higher values of the scale parameter $s$ or  to the higher momenta variables $\vec p$   or  also for the smaller distances, and we assume that $s\Lambda$ and $\Lambda$ are large. However, the analysis in the UV regime can be extended to IR limit, which corresponds to the smaller values of the scale parameter $s$. 
More precisely, our approximation can be characterized by both $s\Lambda$ and $\Lambda$  in the UV and by $s\Lambda/\Lambda$ in the IR. 
At scale $\Lambda$, and up to contributions of order $\lambda_{41}^2$, kipping only the melonic contribution  the action providing from \eqref{s4} is assumed to be of the form
\begin{align}
S^4_{int,s\Lambda}[\bar{T},T]=\delta m^2 \sum_{\vec{p}}\bar{T}_{\vec{p}}T_{\vec{p}}+\delta Z\sum_{\vec{p}}\vec{p}^{\,2}\bar{T}_{\vec{p}}T_{\vec{p}}+\lambda_{41} \sum_{i=1}^5 \sum_{\{\vec{p}_i,\vec{q}_i\}}\mathcal{W}^{(i)}_{\vec{p}_1,\vec{q}_1;\vec{p}_2,\vec{q}_2}T_{\vec{p}_1}T_{\vec{p}_2}\bar{T}_{\vec{q}_1}\bar{T}_{\vec{q}_2}, \label{actionin}
\end{align}
where the first two terms take into account the fact that the parameter of Gaussian measure, the mass and the Laplacian term, can be affected by the integration of the UV modes, and these counter-terms, assumed to be of order $\lambda_{41}$, take into account these modifications. The vertex  $\mathcal{W}^{(i)}_{\vec{p}_1,\vec{q}_1;\vec{p}_2,\vec{q}_2}$ is  a product of delta function and is given by
\bea
\mathcal{W}^{(i)}_{\vec{p}_1,\vec{q}_1;\vec{p}_2,\vec{q}_2}=\delta_{p_{1i}q_{2i}}\delta_{q_{1i}p_{2i}}\prod_{j\neq i}\delta_{p_{1j}q_{1j}}\delta_{p_{2j}q_{2j}}.
\eea
Moreover, note that in this approach the corrections to the Laplacian term are not suppressed by an effective counter-term in the action, but absorbed in the wave function renormalization. It is fixed such that all the Laplacian corrections are canceled by the $\eta_s$ term in the RG equation for $\mathcal{V}^{(1)}$.
We adopt the standard Ansatz, namely that the generic interaction of valence $n$ are of order $\lambda^{n/2-1}_{41}$. This allows to organize systematically the perturbative solution, for which we shall construct the $\lambda^2_{41}$ order. 

\subsubsection*{$\mathcal{V}^{(1)}$ at order $\lambda_{41}$}
The first corrections occur at order $\lambda_{41}$ for $\mathcal{V}^{(1)}$, whose flow equation write as:
\begin{align}\label{flowdeg1}
\Big(\dfrac{d}{ds}+\eta_s\Big)\mathcal{V}^{(1)}=-4\lambda_{41}\sum\limits_{\substack{\vec{p}_1,\vec{q}_1\\\vec{p}_2,\vec{q}_2}}D_{s\,\Lambda\,\vec{p}_1,\vec{\bar{p}}_1}\sym{\mathcal{W}^{(i)}_{\vec{p}_1,\vec{q}_1;\vec{p}_2,\vec{q}_2}}T_{\vec{p}_2}\bar{T}_{\vec{q}_2},
\end{align}
where 
\bea
\sym{\mathcal{W}_{\vec{p}_1,\vec{q}_1;\vec{p}_2,\vec{q}_2}}=\mathcal{W}_{\vec{p}_1,\vec{q}_1;\vec{p}_2,\vec{q}_2}+\mathcal{W}_{\vec{p}_2,\vec{q}_1;\vec{p}_1,\vec{q}_2}
\eea
and $\sym\mathcal{W}:=\sum_i\sym\mathcal{W}^{(i)}$ and $\mathcal{W}=\sum_{i=1}^6\mathcal{W}^{(i)}$. The r.h.s involves two typical contributions which are pictured graphically in figure \ref{fig8c4}, where the contraction with $D_{s,\Lambda}$ is represented by a dotted line with a gray box. 
\begin{center}
\includegraphics[scale=1]{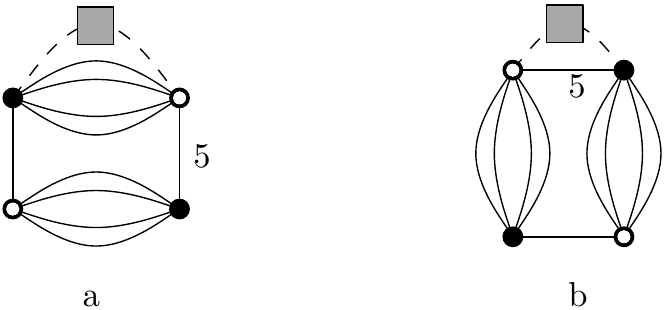} 
\captionof{figure}{Typical graphs contributing to the interaction $\mathcal{V}^{(1)}$ of degree 2.}
\label{fig8c4}
\end{center}
In the UV limit that we consider, the non-melonic contractions of type \ref{fig8c4}b, creating only one internal face (of color 5 in this figure), can be neglected in comparison to the melonic contributions of the form of figure \ref{fig8c4}a. Retaining only the melonic contractions, equation \ref{flowdeg1} becomes:
\begin{align}\label{exacttwopoints}
\Big(\dfrac{d}{ds}+\eta_s\Big)\mathcal{V}^{(1)}=-2\lambda_{41}\sum\limits_{\substack{\vec{p}_1,\vec{q}_1\\\vec{p}_2,\vec{q}_2}}D_{s\,\Lambda,\,\vec{p}_1\vec{q}_1}{\mathcal{W}}_{\vec{p}_1,\vec{q}_1;\vec{p}_2,\vec{q}_2}T_{\vec{p}_2}\bar{T}_{\vec{q}_2},
\end{align}
with $D_{s,\Lambda}=dC_{s\Lambda}/ds$.
Expanding this relation  in power of $p_{5}$, we generate mass and wave function corrections, and also the sub-dominant corrections, involving powers of $p_{5}$ greater than two. They correspond to the first deviation to the original form  \ref{actionin}. Neglecting these sub-dominant contributions, we get the expansion
\begin{align}\label{sum1}
\sum_{p_1,...,p_4}\dfrac{2}{s^3\Lambda^2}&e^{-\frac{1}{(s\Lambda)^2}(\vec{p}^{\,2}+m^2)}\sim 2\pi^2 s\Lambda^2-\frac{2\pi^2}{s}(p_5^2+m^2)+\mathcal{O}(s),
\end{align}
for which we only keep the leading order terms in $s$, we can extract the dominant contributions to the mass and wave-function renormalization. The term in $p_5^2$ generates a non-local 2-point interaction of the form $-\delta Z(s) \Tr(\bar{T}\Delta_{\vec{g}}T)$, where $\Delta_g$ is the Laplacian on $U(1)^{\times 5}$, and the first term generates a mass correction. Summing over the five colors, we find, at first order in $\lambda_{41}$:
\begin{equation}
\eta_s=\dfrac{4\pi^2 \lambda_{41}}{s},\quad
\dfrac{d }{ds}\delta m^2=-4\pi^2 \lambda_{41}  s\Lambda^2+\dfrac{4\pi^2 \lambda_{41}}{s}m^2.
\end{equation}

\subsubsection*{$\mathcal{V}^{(3)}$ and $\mathcal{V}^{(2)}$ at order $\lambda^2_{41}$}
Let us focus on the second order perturbative solution i.e. at  $\lambda^2_{41}$ in which we have to take into account the contributions of interactions of valence six, $\mathcal{V}^{(3)}$, verifying the flow equation:
\begin{equation}
\dfrac{d\mathcal{V}^{(3)\,\vec{q}_{1},\vec{q}_{2},\vec{q}_{3}}_{\vec{p}_{1},\vec{p}_{2},\vec{p}_{3}}}{ds}=4\lambda^2_{41}\sum_{i,j,\vec{p},\vec{q}}\mathcal{W}^{(i)}_{\vec{p}_1,\vec{q}_1,\vec{p},\vec{q}_2}\mathcal{W}^{(j)}_{\vec{p}_2,\vec{q}_3;\vec{p}_3,\vec{q}}\,D_{s,\Lambda,\vec{p}\vec{q}},
\end{equation}
which can be easly integrated with  the initial condition  $\mathcal{V}^{(3)\,\vec{q}_{1},\vec{q}_{2},\vec{q}_{3}}_{\vec{p}_{1},\vec{p}_{2},\vec{p}_{3}}(1)=0$ as:
\begin{align}\label{v3}
\mathcal{V}^{(3)\,\vec{q}_{1},\vec{q}_{2},\vec{q}_{3}}_{\vec{p}_{1},\vec{p}_{2},\vec{p}_{3}}(s)=&-4\lambda^2_{41}\sum_{i,j,\vec{p},\vec{q}}\mathcal{W}^{(i)}_{\vec{p}_1,\vec{q}_1,\vec{p},\vec{q}_2}\mathcal{W}^{(j)}_{\vec{p}_2,\vec{q}_3;\vec{p}_3,\vec{q}} \,\big(C_{\Lambda}-C_{s\Lambda}\big)_{\vec{p}\,\vec{q}}\,.
\end{align}
As for the interaction of degree $1$, the structure of this effective interaction can be understood as a contraction between two bubbles, as pictured in figure \ref{fig6c4}, where the dotted line with a gray box represents the contraction with $C_{\Lambda}-C_{s\Lambda}$. 
\begin{center}
\includegraphics[scale=1.4]{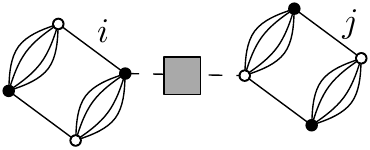} 
\captionof{figure}{Typical graph contributing to the interaction of $\mathcal{V}^{(3)}$ of degree 6.}.
\label{fig6c4}
\end{center}

Let us now  build the effective coupling for the quartic melonic interaction at  order $\lambda^2_{41}$, for which we shall extract only  the leading behavior. From the Wilson-Polchinski  flow equations \eqref{WPnew}, it seems that the coupling evolution receives many contributions in which the  first one comes from $\mathcal{V}^{(3)}$. Now deriving two times this interaction with respect to the fields, we obtain an interaction of degree two, which can be either 1PI, when the contraction with $D_{s\,\Lambda}$ links two black and white nodes of two different bubbles, or one particle reducible (1PR) if the two nodes stand on the same interaction bubble. Explicitly we get
\begin{align}
\nonumber\Big[\frac{d}{ds}+2\eta_s-&4\delta m^2\bar{D}_{s,\,\Lambda}[\{p_i\},\{q_i\}]\Big]\lambda_{41}\mathcal{W}^{(i)}_{\vec{p}_2,\vec{q}_1;\vec{p}_3,\vec{q}_2}=4\lambda^2_{41}\sum_{\vec{p},\vec{q}\vec{p}^{\,\prime},\vec{q}^{\,\prime}}
\Bigg[\bar{\sym}\Big(\mathcal{W}^{(i)}_{\vec{p}^{\,\prime},\vec{q}_1;\vec{p},\vec{q}_2}\mathcal{W}^{(i)}_{\vec{p}_2,\vec{q}^{\,\prime};\vec{p}_3,\vec{q}}\Big)\\
&+2\sum_j\bar{\sym}\Big(\mathcal{W}^{(i)}_{\vec{p}_2,\vec{q}_1;\vec{p},\vec{q}_2}\mathcal{W}^{(j)}_{\vec{p}^{\,\prime},\vec{q}^{\,\prime};\vec{p}_3,\vec{q}}\Big)\Bigg]\times\big(C_{\Lambda}-C_{s\Lambda}\big)_{\vec{p}\,\vec{q}}\,D_{s,\Lambda,\vec{p}^{\,\prime}\vec{q}^{\,\prime}}\,, \label{exactebetafunct}
\end{align}
where:
\begin{align}
&\bar{\sym}\Big(\mathcal{W}^{(i)}_{\vec{p}^{\,\prime},\vec{q}_1;\vec{p},\vec{q}_2}\mathcal{W}^{(j)}_{\vec{p}_2,\vec{q}^{\,\prime};\vec{p}_3,\vec{q}}\Big):=\mathcal{W}^{(i)}_{\vec{p}^{\,\prime},\vec{q}_1;\vec{p},\vec{q}_2}\mathcal{W}^{(j)}_{\vec{p}_2,\vec{q}^{\,\prime};\vec{p}_3,\vec{q}}+\mathcal{W}^{(i)}_{\vec{p},\vec{q}_1;\vec{p}^{\,\prime},\vec{q}_2}\mathcal{W}^{(j)}_{\vec{p}_3,\vec{q}^{\,\prime};\vec{p}_2,\vec{q}},
\end{align}
and
\begin{equation}
\bar{D}_{s,\,\Lambda}[\{p_i\},\{q_i\}]:=D_{s,\,\Lambda}(\vec{p}_2)+D_{s,\,\Lambda}(\vec{q}_1)+D_{s,\,\Lambda}(\vec{p}_3)+D_{s,\,\Lambda}(\vec{q}_2).
\end{equation}
Equation \ref{exactebetafunct} gives the exact behavior for the beta function at order $\lambda^2_{41}$, but we can easily  see that it reduces to the expression of the beta function already obtained for the one loop computation in the deep UV sector. Indeed, retaining only the melonic contributions, and noting that 1PR contributions of the r.h.s are exactly canceled by the term involving the mass correction $\delta m$ in the l.h.s, we get:
\begin{align}
\Big[\frac{d}{ds}+2&\eta_s\Big]\lambda_{41}\mathcal{W}^{(i)}_{\vec{p}_2,\vec{p}_3;\vec{q}_1,\vec{q}_2}\approx 4\lambda^2_{41}\sum_{\vec{p},\vec{q}\vec{p}^{\,\prime},\vec{q}^{\,\prime}}\mathcal{W}^{(i)}_{\vec{p}^{\,\prime},\vec{p};\vec{q}_1,\vec{q}_2}\times \mathcal{W}^{(i)}_{\vec{p}_2,\vec{p}_3;\vec{q}^{\,\prime},\vec{q}} \big(C_{\Lambda}-C_{s\Lambda}\big)_{\vec{p}\,\vec{q}}D_{s,\Lambda,\vec{p}^{\,\prime}\vec{q}^{\,\prime}}.
\end{align}\label{couplingevolve}
The computation of the loop appearing on the r.h.s leads to
\begin{align}\label{sum2}
\sum_{p_1,...,p_4}\int_{1}^{s} ds' \dfrac{4}{s'^3s^3\Lambda^4}&e^{-\big(\frac{1}{(s\Lambda)^2}+\frac{1}{(s'\Lambda)^2}\big)(\vec{p}^{\,2}+m^2)}\sim -\frac{\pi^2}{s}+\mathcal{O}(s),
\end{align}
from which we finally deduce that:
\begin{equation}
s\dfrac{d \lambda_{41}}{ds}=-4\pi^2\lambda^2_{41}\label{effective1}
\end{equation}
which, as claimed before, is exactly the value of the one-loop beta function already obtained in the one loop computation of the beta function.\\

\noindent
We conclude that the main advantage of the Wilson-Polchinski equation is that it provides a very well defined  interpretation of the renormalization group flow in the space of couplings. However, except for perturbative computations, the Wilson-Polchinski equation is more adapted to mathematical and formal proofs than to non-perturbative analysis.  The analysis  beyond the perturbative level requires another formulation of the coarse-graining renormalization group, called \textit{Wetterich equation}, which allows usually to better capture the non-perturbative effects. The price to pay is an approximation scheme a bit more difficult to use. This non-perturbative approach to the renormalization group flow will be the subject of the next  sections. 

\section{Wetterich flow equation}\label{sec3}

The Wetterich method and its incarnation into the FRG approach is a set of techniques allowing to go beyond the difficulties coming from the Wilson-Polchinski equation, in particular in regard to track non-perturbative aspects. The Wetterich equation is a first-order functional integro-differential equation for the effective action. The central object of the method is a continuous set of models labelled with a real parameter $s$ running from UV scales ($s \to +\infty$) to the IR scales ($s\to -\infty$). The physical running scale $e^s$ define for each models what is UV and what is IR, the fluctuation with a large size with respect to the referent scale (the UV fluctuations) being integrated out. The renormalization group equation then describes how the coupling constant change when the referent scale change. To say more, each model is characterized by a specific partition function $\mathcal{Z}_s$, labeled by  $s$ and defined as:
\begin{equation}
\mathcal{Z}_s[J,\bar{J}]:= \int d\mu_C \, e^{-S_{int}(T,\bar{T})+R_s[T,\bar T] +\langle J,\bar T\rangle+\langle T,\bar J \rangle}\,.
\end{equation}
As a result, the original model corresponds to $R_s[T,\bar T]=0$, and because physically this limit have to match with the IR limit $e^s\to 0$, we require that $R_{s}[T,\bar T]$ vanish in the same limit. The term $R_s[T,\bar T]$ called \textit{IR regulator} play the same role as a momentum dependent mass term, becoming very large in the UV and vanishing in the IR. It is chosen ultra-local in the usual sense:
\begin{equation}
R_{s}[T,\bar T]:= \sum_{\vec{p}} \bar{T}_{\vec{p}} \,r_s(\vec{p}\,)T_{\vec{p}}\,,
\end{equation}
the regulating function $r_s(\vec{p}\,)$ being chosen to satisfy the boundary conditions in the UV/IR limit. Moreover, for $s$ fixed, $r_s$ aims at freezing the long distance fluctuations, which are discarded from the functional integration. In formula: $r_s(\vec{p}\,) \to 0$ for $\vert \vec{p}\,\vert /e^s \to 0$, and $r_s(\vec{p}\,) \gg 1$ in the opposite limit. \\

\noindent
The object whose we track the evolution is called \textit{effective averaged action} $\Gamma_s$, defined as (slightly modified version of) the Legendre transform of the standard free energy $\mathcal W_s=\ln \mathcal{Z}_s$:
\bea
\Gamma_s[M,\bar M]=\langle \bar J, M\rangle+\langle \bar M, J\rangle-\mathcal W_s[J,\bar J]-R_s[M,\bar M] \,.
\eea
This definition ensures that $\Gamma_s$ satisfies the physical boundary conditions
$
\Gamma_{s=\ln\Lambda}=S,\,\, \Gamma_{s=-\infty}=\Gamma,
$
where $\Lambda$ denote some fundamental UV cutoff.
The fields $M$ and $\bar M$ are the mean values of $T$ and $\bar T$ respectively and are given by
\bea
M=\frac{\partial\mathcal W}{\partial \bar J},\quad \bar M=\frac{\partial \mathcal W}{\partial J}
\eea
where $\mathcal{W}:=\mathcal{W}_{s=-\infty}$.
In general the regulator $r_s$ is chosen to be 
$
r_s=Z(s) k^2 f\Big(\frac{\vec p\,^2}{k^2}\Big),\,\, k=e^s,
$ and such that the  boundary conditions  is well satisfied.such that the boundary conditions in the UV/IR limit are well satisfied. Taking the first derivative with respect to the flow parameter $s$, one can deduce the Wetterich equation, describing the behavior of the effective action $\Gamma_s$ when $s$ changes:
\bea\label{Wetterich}
\partial_s\Gamma_s=\Tr\, \partial_s r_s (\Gamma_s^{(2)}+r_s)^{-1}\,,
\eea
where $\Gamma^{(2)}_s$ denotes the second order partial derivative of $\Gamma_s$ with respect to the mean fields $M$ and $\bar M$. This equation is exact, but generally impossible to be solved exactly. A large part of the FRG approach is then devoted to approximate the exact trajectory of the RG flow. In this review, we will discuss two methods, the truncation method, and the effective vertex expansion method. \\

This section is especially devoted to the truncations. The general strategy is to cut crudely in the full theory space, projecting systematically the flow into the interior of a finite dimensional subspace. To say more, the average effective action is chosen to be of the form:
\bea
\Gamma_s=Z(s)\sum_{\vec p\in \mathbb{Z}^d} T_{\vec p}(\vec p\,^2+ m^2(s))\bar{T}_{\vec p}+\sum_{n}^{N} \,\lambda_n V_{n}(T,\bar T)
\eea
where $N$ is \textit{finite}, $V_n$ stands for the interaction function of order $n$ and $ m^2$ and $ \lambda_n$ are the mass and coupling constants. With this truncation and with an appropriate regulator it is possible to solve the Wetterich flow equation \eqref{Wetterich}.
 In the case of quartic melonic interaction and by taking the standard modified  Litim's regulator:
\bea\label{Litim}
r_s(\vec p\,)=Z(s)(e^{2s}-\vec p\,^2)\Theta(e^{2s}-\vec p\,^2)
\eea
the Wetterich equation  can be solved analytically and the phase diagram may be given \cite{Lahoche:2018ggd}-\cite{Lahoche:2018oeo}, \cite{Lahoche:2018vun}. The corresponding non trivial fixed points can be studied taking into account the behavior of the flow around these points. Note that the validity of the fixed point require a few analysis taking into account the Ward-Takahashi identities as a new constraint along the flow line. The full violation of this constraint for quartic melonic interaction make this class of fixed points unphysical .  We discuss this point in detail in this section (for more detail see subsection \eqref{WID}). The flow equations are
\bea\label{flownew}
\left\{\begin{array}{llll}
\dot m^2&=-2d\lambda_{41} I_2(0)\\
\dot Z(s)&=-2\lambda_{41} I_2'(q=0)\\
\dot{\lambda}_{41}&=4\lambda^2_{41} I_3(0)
\end{array}\right.
\eea
with the renormalization condition
\bea\label{rencond}
m^2(s)=\Gamma_s^{(2)}(\vec p=\vec 0),\quad \lambda_{41}(s)=\frac{1}{4}\Gamma_s^{(4)}(\vec 0,\vec 0,\vec 0,\vec 0).
\eea
where
\bea
I_n(q)=\sum_{\vec p\in\mathbb{Z}^{(d-1)}}\frac{\dot r_s}{(Z(s)\vec p\,^2+Zq^2+m^2+r_s)^n}.
\eea
Explicitly using the integral representation of the above sum and with $d=5$, $\eta=\dot{Z}/Z$ we get
\bea
I_n(0)=\frac{\pi^2 e^{6s-2ns}}{6Z(s)^{n-1}(\bar m^2+1)^n}(\eta+6),\quad I'_n(0)=-\frac{\pi^2 e^{4s-2ns}}{2Z(s)^{n-1}(\bar m^2+1)^n}(\eta+4).
\eea
In order to get an autonomous system, the standard strategy consist at extracting from the couplings the part coming from their own scaling, defining their \textit{canonical dimension}. Strictly speaking, fields, couplings and all the parameters involved in the theory are dimensionless, because there are no referent space-time, and then not referent scale. The canonical dimension emerge taking into account quantum corrections, and is usually defined as the optimal scaling, with respect to the UV cut-off of the quantum corrections. Conversely, it can be defined as the scaling transformation allowing to get an autonomous system. Note that these two points of views are note strictly equivalent, especially with respect to the choice of the initial content of the theory. For our purpose however, the two strategy provides exactly the same rescaling, and in term of dimensionless parameter $\lambda_{41}=:Z^2 \bar \lambda_{41}$, $m^2=:e^{2s}Z\bar m^2$  the system \eqref{flownew}  becomes
\begin{align}
\left\{
    \begin{array}{ll}
       \beta_m&=-(2+\eta)\bar{m}^{2}-2 d\bar{\lambda}_{41}\,\frac{\pi^2}{(1+\bar{m}^{2})^2}\,\left(1+\frac{\eta}{6}\right)\,, \\
       \beta_{41}&=-2\eta \bar{\lambda}_{41}+4\bar{\lambda}_{41}^2 \,\frac{\pi^2}{(1+\bar{m}^{2})^3}\,\left(1+\frac{\eta}{6}\right)\,, \label{syst2}
    \end{array}
\right.
\end{align}
where $\beta_m:= \dot{\bar{m}}^{2}$,  $\beta_{41}:=\dot{\bar{\lambda}}_{41}$ and:
\begin{equation}
\eta:=\frac{4\bar{\lambda}_{41} \pi^2}{(1+\bar{m}^{2\alpha})^2-\bar{\lambda}_{41}\pi^2}\,.\label{etatruncated}
\end{equation}
 The solutions of the system \eqref{syst2} is given analytically : 
\bea
p_\pm=\Big(\bar m^2_{\pm}=-\frac{23\mp\sqrt{34}}{33},\bar\lambda_{41,\pm}=\frac{328\mp8\sqrt{34}}{11979\pi^2}\Big).
\eea
Numerically
\bea
p_+=(-0.52,0.0028),\quad p_-=(-0.87,0.0036).
\eea
Apart from the fact that we have a singularity line around  the point $\bar m^2=-1$ in the flow equation \eqref{flownew},  another  second singularity arise from the anomalous dimension denominator, and corresponds to a line of singularity, with equation:
\bea
\Omega(\bar m,\bar\lambda_{41}):=(\bar m^2+1)^2-\pi^2\bar\lambda_{41}=0
\eea
This line of singularity splits the two dimensional phase space of the truncated theory into two connected regions characterized by the sign of the function $\Omega$. The region $I$, connected to the Gaussian fixed point for $\Omega>0$ and the region $II$ for $\Omega<0$. For $\Omega=0$, the flow becomes ill defined. The existence of this singularity is a common feature for expansions around vanishing means field, and the region $I$ may be viewed as the domain of validity of the expansion in the symmetric phase. Note that to ensure the positivity of the effective action, the melonic coupling must be positive as well. Therefore, we expect that the physical region of the reduced phase space correspond to the region $\lambda_{41}\geq 0$. From definition of the connected region $I$ and because of the explicit expression \eqref{etatruncated}, we deduce that :
\begin{equation}
\eta \geq 0\,,\qquad \text{In the symmetric phase}\,.
\end{equation}
Then, only the fixed point $p_+$ is taking into account.  In the next subsection we will discuss the  violation of the Ward identity around this fixed point $p_+$, and then clarify our analysis given in  \cite{Lahoche:2018ggd}.   The phase diagram is given in the figure \eqref{figflow1}
\begin{center}
\includegraphics[scale=0.6]{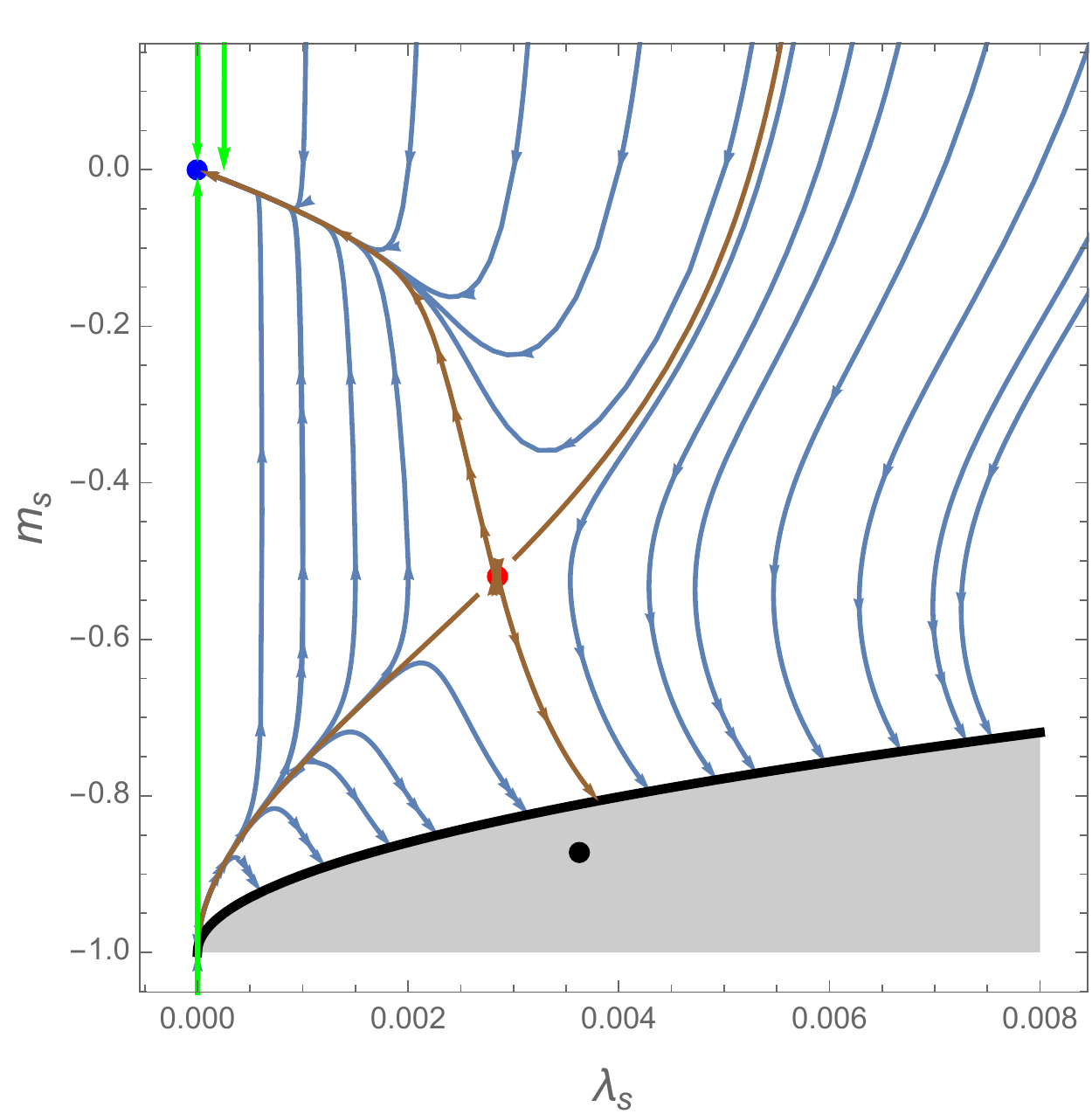} 
\captionof{figure}{Renormalization group flow trajectories around the relevant fixed points obtained from a numerical integration. The Gaussian fixed point and the first non-Gaussian fixed point are respectively in blue and in red, and  the last fixed point is in black. This fixed point is in the grey region bounded by the singularity line corresponding to the denominator of $\eta$. Finally, in green and brown we draw the eigendirections around Gaussian and non-Gaussian fixed points respectively. Note that arrows this fixed point  the flow are oriented from IR to UV.}\label{figflow1}
\end{center}
\subsection{Convenient search of the Ward identities}\label{WID}
Let $\mathcal{U}=(U_1,U_2,\cdots, U_d)$, where the $U_i\in U_\infty$ are infinite size unitary matrices in momentum representation. We define the transformation:
\bea
\mathcal{U}[T]_{\vec p}&=& \sum_{\vec q}U_{1\,,p_1q_1}U_{2\,,p_2q_2}\cdots U_{d\,,p_dq_d}    T_{\vec q\,},\label{transform}
\eea
such that the interaction term is invariant i.e.
$
\mathcal{U}[S_{int}]=S_{int}\,.
$
Then consider an infinitesimal transformation: 
\begin{equation}
\mathcal{U}=\mathbb{\textbf{I}}+\vec{\epsilon},\quad \vec{\epsilon}=\sum_i\mathbb{I}^{\otimes (i-1)}\otimes \epsilon_i\otimes \mathbb{I}^{\otimes(d-i)}\,,
\end{equation}
where $\mathbb{I}$ is the identity on $U_\infty$, $\mathbb{\textbf{I}}=\mathbb{I}^{\otimes d}$ the identity on $U_\infty^{\,\otimes d}$, and $\epsilon_i$ denotes skew-symmetric hermitian matrix such that $\epsilon_i=-\epsilon_i^\dagger$ and 
$
\vec{\epsilon}_i[T]_{\vec{p}}={\epsilon_i}_{p_iq_i}T_{p_1,\cdots,q_i,\cdots, p_d}$.
The  invariance of the path integral \eqref{path} means $\vec{\epsilon}\,[\mathcal Z_s[J,\bar{J}]]=0$, i.e.:
\begin{equation}\label{sat1}
\vec{\epsilon}\,[\mathcal Z_s[J,\bar{J}]]=\int dT d\bar{T} \bigg[ \vec{\epsilon}\,[S_{kin}]+\vec{\epsilon}\,[S_{int}]+\vec{\epsilon}\,[S_{source}]\bigg] e^{-S_s[T,\bar{T}]+\langle \bar{J},T\rangle+\langle \bar{T},J\rangle} =0.\,
\end{equation}
Computing each term separately, we get successively using linearity of the operator $\vec{\epsilon}$:
\bea
&&\vec{\epsilon}\,[S_{int}]=0\,,\label{satnon1}
\\
&&\vec{\epsilon}\,[S_{source}]=-\sum_{i=1}^d\sum_{\vec{p}, \vec{q}\,} \prod_{j\neq i} \delta_{p_jq_j} [\bar{J} _{\vec{p}}\,T_{\vec{q}\,}-\bar{T}_{\vec{p}}{J} _{\vec{q}\,}]{\epsilon_i}_{p_iq_i}\,,\label{satnon2}
\\
 &&\vec{\epsilon}\,[S_{kin}]=\sum_{i=1}^d\sum_{\vec{p}, \vec{q}} \prod_{j\neq i} \delta_{p_jq_j} \bar{T}_{\vec{p}}\big[C_s(\vec{p}\,^{2})-C_s(\vec{q}\,^{2})\big]T_{\vec{q}}\,\,\epsilon_{ip_iq_i}\,,\label{sat2}
\eea
where $\prod_{j\neq i} \delta_{p_jq_j}:=\delta_{\vec p_{\bot_i}\vec q_{\bot_i}}$, $p_{\bot_i}:=\vec p\setminus\{p_i\}$, $C_s^{-1}=C_{-\infty}^{-1}+r_s$ and $C_{-\infty}^{-1}=Z_{-\infty}\vec p\,^2+m_{-\infty}^2$.  $Z_{-\infty}$ is the renormalized  wave function  usually denoted by $Z$.
We get the following result:
\begin{proposition} \label{propnew}\,
The ward identity gives relation between two and four point functions as:
\bea
\sum_{\vec r_{\bot_i},\vec s_{\bot_i}}\delta_{\vec r_{\bot_i}\vec s_{\bot_i}}(C^{-1}_s(\vec r)-C^{-1}_s(\vec s))\langle T_{\vec r}\bar T_{\vec s} T_{\vec p}\bar T_{\vec q}\rangle=-\delta_{\vec p_{\bot_i}\vec q_{\bot_i}}(G_s(p)-G_s(q))\delta_{ r_{i} s_{i}},
\eea
where, defined by $\Gamma^{(4)}_s$, the 1PI four point function, we get
\bea
\langle T_{\vec r}\bar T_{\vec s} T_{\vec p}\bar T_{\vec q}\rangle=\Gamma^{(4)}_{s,\vec r\vec s;\vec p\vec q}\Big(G_s(\vec p)G_s(\vec q)+\delta_{\vec r\vec p}\delta_{\vec s\vec q}\Big)G_s(\vec r)G_s(\vec s)
\eea
\end{proposition}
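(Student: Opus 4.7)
The starting point is the invariance identity $\vec{\epsilon}[\mathcal{Z}_s[J,\bar J]]=0$ from \eqref{sat1}, which must be augmented by the regulator contribution: since $R_s[T,\bar T]$ has the same ultra-local structure as the kinetic action, I would absorb $r_s$ into an effective inverse covariance $C_s^{-1}(\vec p)=Z_{-\infty}\vec p^{\,2}+m_{-\infty}^2+r_s(\vec p)$ so that \eqref{sat2} already refers to the full $C_s^{-1}$. Combining the three pieces \eqref{satnon1}--\eqref{sat2}, and using that $\vec\epsilon[S_{int}]=0$, I obtain inside the expectation a master identity linear in the arbitrary skew-symmetric matrices $\epsilon_i$. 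Because $\epsilon_i$ is arbitrary, the coefficient of each entry ${\epsilon_i}_{ab}$ must vanish separately, which yields a raw functional constraint coupling the field bilinears to the currents $J,\bar J$.

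To extract the four-point Ward identity, I convert the fields inside the integral to source derivatives, $T_{\vec p}\to\partial/\partial\bar J_{\vec p}$ and $\bar T_{\vec p}\to\partial/\partial J_{\vec p}$, turning the master identity into a first-order linear functional differential operator annihilating $\mathcal{Z}_s[J,\bar J]$. I then apply $\partial^2/(\partial J_{\vec p}\,\partial\bar J_{\vec q})$ to this operator equation and set $J=\bar J=0$. The differentiation of the kinetic piece produces precisely the summed four-point correlator on the left-hand side of the claim, carrying the transverse delta $\delta_{\vec r_{\bot_i}\vec s_{\bot_i}}$ and the weight $C_s^{-1}(\vec r)-C_s^{-1}(\vec s)$; the differentiation of the source piece \eqref{satnon2} produces a difference of two-point functions $G_s(\vec p)-G_s(\vec q)$ constrained by $\delta_{\vec p_{\bot_i}\vec q_{\bot_i}}$ and $\delta_{r_i s_i}$. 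Collecting signs gives exactly the displayed identity, with $r_i,s_i$ surviving as the free components along the distinguished colour $i$.

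For the companion formula expressing $\langle T_{\vec r}\bar T_{\vec s}T_{\vec p}\bar T_{\vec q}\rangle$ through the 1PI vertex $\Gamma^{(4)}_s$, I would invoke the standard Legendre-transform relations. The connected four-point function $W^{(4)}$ is the amputated 1PI vertex $\Gamma^{(4)}_s$ dressed by four full propagators $G_s$; in the symmetric phase where the mean fields vanish at the expansion point, the disconnected contribution reduces to the Wick pairings $G_s(\vec r)G_s(\vec p)\delta_{\vec r\vec s}\delta_{\vec p\vec q}$ and $G_s(\vec r)G_s(\vec p)\delta_{\vec r\vec q}\delta_{\vec p\vec s}$, and the melonic tensorial structure carried by $\Gamma^{(4)}_s$ allows these pieces to be repackaged into the compact form written in the statement.

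The principal delicacy is twofold. First, the regulator $R_s$ depends on $\vec p$ and is therefore \emph{not} invariant under $\mathcal{U}$, so one must resist the reflex to treat it as part of the invariant interaction and instead lump it into the effective $C_s^{-1}$ before computing $\vec{\epsilon}[S_{kin}]$; overlooking this is the typical source of error. Second, the skew-symmetry $\epsilon_i=-\epsilon_i^{\dagger}$ is essential when extracting the vanishing-coefficient identity: the naive coefficient of ${\epsilon_i}_{ab}$ decomposes into symmetric and antisymmetric parts, and only the antisymmetric part yields a nontrivial constraint, which is precisely what produces the difference structures $C_s^{-1}(\vec r)-C_s^{-1}(\vec s)$ and $G_s(p)-G_s(q)$ on the two sides of the proposition.
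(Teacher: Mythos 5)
Your proposal is correct and follows essentially the same route as the paper: unitary invariance of the partition function with the regulator absorbed into $C_s^{-1}$, conversion of the non-invariant kinetic and source variations into source derivatives acting on $\mathcal{Z}_s$, and a further double differentiation to extract the relation between the $4$- and $2$-point functions, with the second display obtained from the standard Legendre-transform (connected/1PI) decomposition in the symmetric phase. The only cosmetic difference is that you differentiate twice with respect to the sources and then set $J=\bar J=0$, whereas the paper differentiates the identity written for $\mathcal{W}_s$ with respect to the mean fields $M,\bar{M}$ and uses $G_s^{-1}=\Gamma_s^{(2)}+r_s$; the two procedures are equivalent here (note only that the master identity is second order, not first order, in source derivatives).
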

\begin{proof}
 The formal invariance of the  path integral  implies that the variations of these terms have to be compensate by a non trivial variation of the source terms. 
 Combining the two expressions \eqref{sat1}, \eqref{satnon1}, \eqref{satnon2} and  \eqref{sat2}, we come to 
\begin{align}
\sum_{i=1}^d\sum_{\vec{p}_{\bot_i}, \vec{q}_{\bot_i}} \delta_{\vec p_{\bot_i}\vec q_{\bot_i}} \bigg[\frac{\partial}{\partial J_{\vec{p}} }\big[C_s(\vec{p}\,^{2})-C_s(\vec{q}\,^{\,{2}})\big]\frac{\partial}{\partial \bar{J}_{\vec{q}\,}}-\bar{J} _{\vec{p}}\,\frac{\partial}{\partial \bar{J}_{\vec{q}\,}}+{J} _{\vec{q}\,}\frac{\partial}{\partial J_{\vec{p}} }\bigg] e^{\mathcal W_s[J,\bar{J}]} =0\,,\label{socle2}
\end{align}
where we have used  the fact that, for all polynomial $P(T,\bar T)$ the following identity holds:
\bea
\int\, d\mu_C\,\, P(T,\bar T) e^{\langle \bar J, T\rangle+\langle\bar T, J\rangle}=\int\, d\mu_C\,\, P\Big(\frac{\partial}{\partial \bar J},\frac{\partial}{\partial J}\Big)  e^{\langle \bar J, T\rangle+\langle\bar T, J\rangle}.
\eea
 Equation \eqref{socle2} is satisfied for all $i$. Now, expanding each derivative, the partition function $\mathcal Z_s[J,\bar{J}]=:e^{\mathcal W_s[J,\bar{J}]}$ of the theory defined by the action \eqref{s4}  verify the following  (WT identity),
\bea\label{left}
\sum_{\vec{p}_{\bot_i}, \vec{q}_{\bot_i}} \delta_{\vec p_{\bot_i}\vec q_{\bot_i}}  \bigg\{\big[C_s(\vec{p}\,^{2})-C_s(\vec{q}\,^{2})\big]\left(\frac{\partial^2 W_s}{\partial \bar{J}_{\vec{q}\,}\,\partial {J}_{\vec{p}}}+\bar{M}_{\vec{p}}M_{\vec{q}\,}\right)-\bar{J} _{\vec{p}}\,M_{\vec{q}\,}+{J} _{\vec{q}\,}\bar{M}_{\vec{p}}\bigg\}=0\,.\label{Ward0}
\eea
WI-identity contains some informations on the relations between Green functions. In particular, they provide a relation between $4$ and $2$ points functions, which,  maybe translated as a relation between wave function renormalization $Z$ and vertex renormalization $Z_\lambda$. Applying $\partial^2/\partial M_{\vec{r}}\,\partial \bar{M}_{\vec{s}}$ on the left hand side of \eqref{left}, and taking into account the relations
\begin{equation}
\frac{\partial M_{\vec{p}}}{\partial {J}_{\vec{q}\,}}= \frac{\partial^2 \mathcal W_s}{\partial \bar{J}_{\vec{p}}\,\partial {J}_{\vec{q}\,}}\, \quad\mbox{and }\quad  \, \frac{\partial \Gamma_s}{\partial M_{\vec{p}}}=\bar{J}_{\vec{p}}-r_s(\vec{p})\bar{M}_{\vec{p}}\,,
\end{equation}
as well as the definition $G_{s\,,\vec{p}\vec{q}\,}^{-1}:=(\Gamma^{(2)}_s+r_s\big)_{\vec{p} \vec{q}\,}$, we find that
\begin{align}
\nonumber\sum_{\vec{p}_{\bot_i}, \vec{q}_{\bot_i}} \delta_{\vec p_{\bot_i}\vec q_{\bot_i}}  &\bigg[\big[C_s(\vec{p}\,^{2})-C_s(\vec{q}\,^{{2}})\big]\bigg[\frac{\partial^2 G_{s\,,\vec{p},\vec{q}\,}}{\partial M_{\vec{r}\,}\,\partial \bar{M}_{\vec{s}}}+\delta_{\vec{p}\vec{r}}\,\delta_{\vec{q}\,\vec{s}\,}\bigg]-\Gamma^{(2)}_{s\,,\vec{r}\vec{p}}\,\delta_{\vec{s}\,\vec{q}\,}+\Gamma^{(2)}_{s\,,\vec{s}\,\vec{q}\,}\delta_{\vec{p}\vec{r}}\\
&-r_s(\vec{p}\,^2)\delta_{\vec{r}\vec{p}}\,\delta_{\vec{s}\,\vec{q}\,}+r_s(\vec{q}^{\,2})\delta_{\vec{s}\,\vec{q}\,}\delta_{\vec{p}\vec{r}}-\Gamma_{s,\vec{r};\vec{s}\vec{p}}^{(1,2)}\,M_{\vec{q}\,}+\Gamma^{(2,1)}_{s,\vec{r}\vec{q};\vec{s}}\bar{M}_{\vec{p}}\bigg] =0\,,
\end{align}
and therefore the proposition \eqref{propnew} is well given.
\end{proof}
\noindent
 In the deep UV, for large  scale $s$, a continuous approximation for variables is suitable. Then, setting $r_1=p_1$, $\vec{p}\to \vec{q}$, $r_1\to s_1$, we get finally,   in the deep UV, the $4$ and $2$-point functions are related as (on both sides, $r_1=p_1$):
\begin{align}\label{WT-id}
\sum_{\vec{r}_{\bot_1}} G_s^2(\vec{r}\,)\frac{dC_s^{-1}}{dr_1^{2}}(\vec{r}\,)\Gamma_{s,\vec{r},\vec{r},\vec{p},\vec{p}}^{(4)}=\frac{d}{dp_1^{2}}\left(C_\infty^{-1}(\vec{p}\,)-\Gamma_s^{(2)}(\vec{p}\,)\right).
\end{align}
To give  more comment  on  the structure of this equation, we have to specify the structure of the vertex function. To this end, we use  this loop to discard the irrelevant contributions, and we keep only the \textit{melonic contribution} of the function $\Gamma^{(4)}$,  denoted by $\Gamma_{\text{melo}}^{(4)}$. In the symmetric phase, the melonic contribution $\Gamma_{\text{melo}}^{(4)}$ may be defined as the part of the function $\Gamma^{(4)}$ which decomposes as a sum of melonic diagrams in the perturbative expansion.
The structure of the melonic diagrams has been extensively discussed in the literature, and specifically for the approach that we propose here in \cite{Lahoche:2018oeo}-\cite{Lahoche:2018vun}. Formally, they are defined as the graphs optimizing the power counting; and they family can be build from the recursive definition of the vacuum melonic diagrams, from the cutting of some internal edges. Among there interesting properties, these construction imply the following statement:
\begin{proposition}\label{propmelons}
Let $\mathcal{G}_N$  be a $2N$-point 1PI melonic diagrams build with more than one vertices for a purely quartic melonic model. We call external vertices the vertices hooked to at least one external edge of $\mathcal{G}_N$ has :

\item $\bullet$  two external edges per external vertices, sharing $d-1$ external faces of length one. 
\item $\bullet$ $N$ external faces of the same color running through the interior of the diagram. 
\end{proposition}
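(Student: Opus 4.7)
The natural approach is induction on the number of vertices $V$ of $\mathcal{G}_N$, exploiting the well-known recursive structure of melonic diagrams: every purely quartic melonic $2N$-point 1PI graph with $V \geq 2$ vertices is obtained from the elementary 2-vertex melon (two quartic vertices of the same color joined by two parallel propagators, forming a 1PI 4-point bubble) by iterating the ``melon insertion'' move---replacing a propagator edge by a 2-point melonic self-energy---and then amputating some internal edges to expose external nodes. I would use precisely this generation process, combined with the elementary strand structure of the quartic melonic vertex $\mathcal{W}^{(i)}$, to read off the external structure of $\mathcal{G}_N$.

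For the base case $V=2$, the only 1PI melonic diagram is the elementary 4-point bubble: both vertices are external, each carries exactly 2 external edges (the two nodes of the distinguished color $i$), the strands of every color $j\neq i$ at each vertex close on the other external node of that same vertex giving $d-1$ external faces of length one, and the two $i$-colored strands produce $N=2$ external faces of color $i$ threading through the two connecting propagators. For the inductive step, given $\mathcal{G}_N$ with $V>2$, I would locate a ``tip'' melon: a pair of vertices $(v_1,v_2)$ of some common color $i'$ forming a 2-point melonic subgraph connected to the remainder of $\mathcal{G}_N$ by a single propagator line, possibly carrying two external edges. Contracting this tip produces a smaller 1PI melonic graph $\mathcal{G}'$, either of type $(N-1)$ (if the tip carried two external legs) or of type $N$ (if it was purely internal). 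The inductive hypothesis applied to $\mathcal{G}'$, together with the fact that reinsertion of the tip only propagates the color-$i'$ strands along the newly opened internal line while generating $d-1$ fresh length-one external faces at each new external vertex, yields both assertions for $\mathcal{G}_N$. In particular, whenever the tip adds an external vertex, its two external legs are forced to be the two nodes of the vertex color $i'$ (i.e.\ the nodes joined by the transposition structure of $\mathcal{W}^{(i')}$), and the long interior face colors inherited from $\mathcal{G}'$ must coincide with $i'$ for the insertion to remain melonic.

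The main obstacle will be justifying the tip-melon extraction in the case where the tip is attached to external edges, because one must exclude the alternative that an external vertex of $\mathcal{G}_N$ carries one external leg and three internal ones, or three external legs and one internal one. The three-external-leg case is immediately killed by 1PI-ness (cutting the unique internal edge disconnects that vertex). The one-external-leg case requires the more substantive input: such a configuration strictly loses internal faces compared to the two-external-leg configuration, as can be seen by a direct strand count using the $d-1$ ``direct'' color structure of $\mathcal{W}^{(i)}$, so it fails to saturate the melonic power counting and is therefore excluded from the melonic sector. Once this face-counting lemma is in place, the two statements of the proposition follow cleanly from the induction.
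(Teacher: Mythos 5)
Your overall strategy---induction on the number of vertices using the recursive structure of melonic graphs---is reasonable in spirit, and it is not far from what the paper itself invokes (the paper gives no detailed proof: it reads the external structure off the characterization of open melonic graphs as vacuum melons with some internal tadpole edges cut along a common internal face, referring to its companion papers). However, your central inductive step has a genuine gap. The ``tip melon'' you contract---``a pair of vertices forming a $2$-point melonic subgraph connected to the remainder by a single propagator line, possibly carrying two external edges''---is combinatorially impossible in precisely the branch you need in order to lower $N$: two quartic vertices carry eight half-edges, so a two-vertex subgraph with two external legs of $\mathcal{G}_N$ plus one attaching propagator leaves an odd number of half-edges to be paired internally; a two-vertex $2$-point insertion whose two outgoing half-edges are both external would be disconnected from the rest; and one external leg plus one attaching line is exactly the $1$PR configuration excluded by hypothesis. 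Only the purely internal self-energy insertion (attached by \emph{two} propagators, not one) is coherent, and contracting it never lowers $N$. Worse, the existence of any such two-vertex tip is false in general: the minimal melonic $6$-point graph---the very one underlying the paper's $\Gamma^{(6)}_{\text{melo}}\propto \sum_{\vec p} G_s^3(\vec p\,)$---is a ring of three vertices, each adjacent pair joined by a single propagator and each vertex carrying two external legs; it contains no two-vertex $2$-point subgraph at all, so your induction cannot reach $N\geq 3$. The move that generates these graphs is the one the paper describes---opening a tadpole edge lying on the boundary of one of the existing external faces, i.e.\ adding or deleting a single external vertex with its two external legs---not insertion or contraction of a two-vertex melon.

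Two further points need repair even where the induction runs. When a contraction leaves a single vertex, the inductive hypothesis is unavailable: the proposition excludes the one-vertex graph, and its conclusion is in fact false there (four external legs on one vertex), so such reductions must be treated separately. And the face-counting lemma you invoke to exclude an external vertex with a single external leg is the real content of the statement---it is what forces the two external legs into the ``direct'' channel sharing the $d-1$ length-one faces of colors $j\neq i$---yet you only assert it. A workable route close to the paper's is: characterize $2N$-point melonic graphs as vacuum melons with $N$ tadpole edges cut along one common internal face of color $i$, and read off directly that each cut creates one external vertex whose two legs share the $d-1$ length-one external faces, while the opened color-$i$ face yields the $N$ external faces running through the interior; if you prefer induction, induct on $N$ by deleting one external vertex (reconnecting its two internal half-edges) and separately on $V$ by contracting genuine two-propagator self-energy insertions, proving at each stage that the move preserves melonicity and $1$PI-ness.
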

As a direct consequence of the proposition \ref{propmelons}, we expect that melonic $4$-points functions is decomposed as:
\begin{equation}
\Gamma_{\text{melo}}^{(4)}=\sum_{i=1}^d \Gamma_{\text{melo}}^{(4),i}\,,
\end{equation}
the index $i$ running from $1$ to $d$ corresponding to the color of the $2$ internal faces running through the interiors of the diagrams building $ \Gamma_{\text{melo}}^{(4),i}$. Moreover the monocolored components have the following structure:
\begin{equation}
\Gamma_{\text{melo\,}\vec{p}_1,\vec{p}_2,\vec{p}_3,\vec{p}_4}^{(4),i} = \vcenter{\hbox{\includegraphics[scale=0.8]{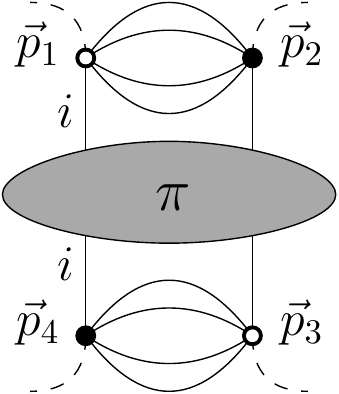} }}+ \vcenter{\hbox{\includegraphics[scale=0.8]{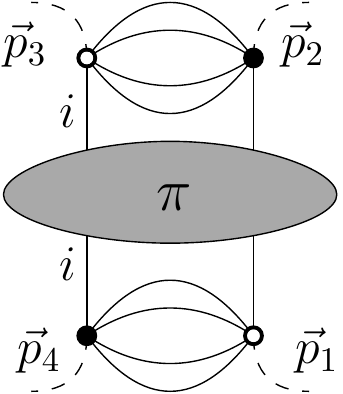} }}\,,\label{decomp4}
\end{equation}
the permutation of the external momenta $\vec{p}_1$ and $\vec{p}_3$ coming from Wick's theorem: There are four way to hook the external fields on the external vertices (two per type of field). Moreover, the simultaneous permutation of the black and white fields provides exactly the same diagram, and we count twice each configurations pictured on the previous equation. This additional factor $2$ is included in the definition of the matrix $\pi$, whose entries depend on the components $i$ of the external momenta running on the boundaries of the external faces of colors $i$, connecting together the end vertices of the diagrams building $\pi$.  \\

\noindent
Inserting \eqref{decomp4} into the Ward identity given from equation \eqref{WT-id}, we get some contributions on the left hand side, the only one relevant of them in the deep UV being, graphically:
\begin{equation}\label{diage}
\vcenter{\hbox{\includegraphics[scale=0.8]{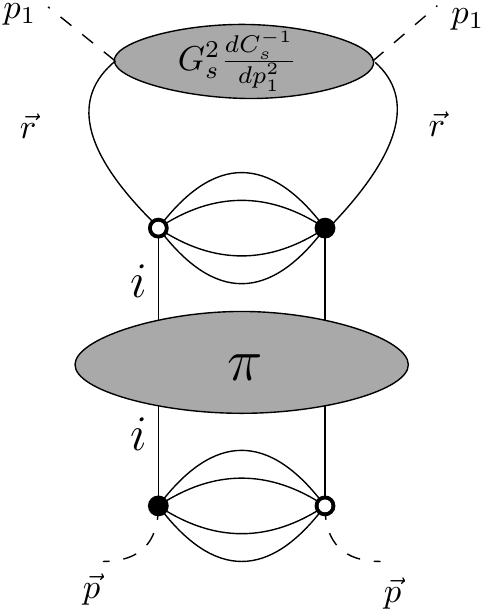} }}+\mathcal{O}\Big(\frac{1}{s}\Big)=\frac{d}{dp_1^{2}}\left(C_s^{-1}(\vec{p}\,)-\Gamma^{(2)}(\vec{p}\,)\right)\,.
\end{equation}
Setting $\vec{p}=\vec{0}$, and using the definition of $C_s^{-1}$ as well as the definition of $C_\infty^{-1}$, the right hand side is reduced to $Z_{-\infty}-Z$. Moreover, the diagram on the left hand side can be written with the following equation $Z_{-\infty} \mathcal{L}_s\, \pi_{00}$ such that the following equality holds:
\begin{equation}
Z_{-\infty} \mathcal{L}_s\, \pi_{00}=Z_{-\infty}-Z\,,
\end{equation}
where we have defined $Z_{-\infty} \mathcal{L}_s$ as:
\begin{equation}
Z_{-\infty} \mathcal{L}_s:=\sum_{\vec{p}\in\mathbb{Z}^{d}} \left(Z_{-\infty}+\frac{\partial r_s}{\partial p_1^{2}}(\vec{p}\,)\right)G^2_s(\vec{p}\,)\delta_{p_10}\,.
\end{equation}
Finally, from definition \eqref{decomp4} we expect that $\Gamma^{(4)}_{\text{melo},\vec{0},\vec{0},\vec{0},\vec{0}}=2\pi_{00}$, and because of the renormalization conditions \eqref{rencond} we must have the relation: $\pi_{00}=2\lambda_{41}(s)$. Therefore,
in the deep UV regime, the Ward identity between $4$ and $2$ point functions provides a non trivial relation between effective coupling and wave function renormalization:
\begin{equation}
2Z_{-\infty} \mathcal{L}_s\, \lambda_{41}=Z_{-\infty}-Z\,.\label{Wardutile}
\end{equation} 
\begin{remark}
Let us  give some important remarks regarding the derivation of the Ward identity \eqref{Wardutile}. First of all, the WI is totally disconnected from the approximation used to solve the non-perturbative Wetterich equation \eqref{Wetterich}. The Wetterich equation and Ward identity are both two functional results, deduced from the definition of the partition function, and have to be treated on the same footing. Their origins, moreover, are completely disconnected. One of them comes from the scale dependence of the model due to the regulator term, the second one comes from the symmetry violation of the action (including source terms) under the $U(N)^d$ group and the formal translation-invariance of the Lebesgue measure. Viewing the set $\mathcal{Z}_s$ has a continuous family of models, one can say that the Wetterich equation dictate how to move from $\mathcal{Z}_s$ to $\mathcal{Z}_{s+\delta s}$ whereas the WI are constraints between the observables at fixed  $s$. 
\end{remark}
From now, in the hope to provide the proof that $p_+$ does not live in the constraint line coming from Ward identity \eqref{Wardutile}, let us give the following result which will be prove in the next section.
\begin{proposition} \textbf{Structure equation for effective coupling:}
In the deep UV, the effective melonic coupling is given in terms of the renormalized coupling $\lambda_{41}^r$ and the renormalized effective loop $\bar{\mathcal{A}}_s:=\mathcal{A}_s-\mathcal{A}_{s=-\infty}$ as:
\begin{equation}
\lambda_{41}(s)=\frac{\lambda_{41}^r}{1+2\lambda_{41}^r\bar{\mathcal{A}}_s},\quad \dot{\lambda}_{41}=-2\lambda_{41}^2\,\dot{\mathcal{A}}_s\,.\label{structure}
\end{equation}
where we defined the quantity $\mathcal{A}_s$ as:
$
\mathcal{A}_s:=\sum_{\vec{p}\in\mathbb{Z}^{(d-1)}}\,G_s^2(\vec{p}\,)\,.
$
\end{proposition}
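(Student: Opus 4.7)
The plan is to obtain a closed algebraic (Dyson–Schwinger/Bethe–Salpeter type) equation for $\lambda_{41}(s)$ by exploiting the recursive combinatorial structure of melonic $1$PI four-point diagrams, and then to eliminate the bare coupling by matching with the renormalization condition at $s=-\infty$. Starting from the decomposition \eqref{decomp4} and the characterization of melons in Proposition \ref{propmelons}, I would classify any such $1$PI melonic four-point diagram according to what is attached to one of its distinguished external vertices: either this external vertex is the whole diagram, contributing the bare quartic vertex $u\,\mathcal{W}^{(i)}$, or it is glued, through a single loop of color $i$, to another melonic four-point function. Because the external faces of color $i$ share a single colored line at vanishing external momenta (Proposition \ref{propmelons}) and the $\delta$-functions inside $\mathcal{W}^{(i)}$ collapse the $d-1$ spectator indices, this loop evaluates to exactly $\mathcal{A}_s=\sum_{\vec{p}\in\mZ^{d-1}}G_s^2(\vec p)$. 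Collecting the two alternatives and using $\pi_{00}=2\lambda_{41}(s)$ together with the combinatorial factor $2$ coming from Wick contractions at the distinguished vertex, I arrive at the self-consistent linear equation
\[
\lambda_{41}(s)=u-2u\,\mathcal{A}_s\,\lambda_{41}(s),
\]
whose unique solution is $\lambda_{41}(s)=u/(1+2u\,\mathcal{A}_s)$, equivalently $1/\lambda_{41}(s)=1/u+2\mathcal{A}_s$.

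Second, to remove the unknown bare coupling $u$, I would impose the deep-IR limit $s\to -\infty$, where by definition $\lambda_{41}(s)\to \lambda_{41}^r$. This yields $1/u=1/\lambda_{41}^r-2\mathcal{A}_{-\infty}$; substituting back and introducing $\bar{\mathcal{A}}_s=\mathcal{A}_s-\mathcal{A}_{-\infty}$, I obtain
\[
\frac{1}{\lambda_{41}(s)}=\frac{1}{\lambda_{41}^r}+2\bar{\mathcal{A}}_s,
\]
which, once solved for $\lambda_{41}(s)$, is the first identity in \eqref{structure}. Differentiating with respect to $s$ on both sides and using that $\mathcal{A}_{-\infty}$ is $s$-independent (so $\dot{\bar{\mathcal{A}}}_s=\dot{\mathcal{A}}_s$) gives $-\dot\lambda_{41}/\lambda_{41}^2=2\dot{\mathcal{A}}_s$, i.e. the second identity in \eqref{structure}.

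The bulk of the work — and the main obstacle — is justifying the self-consistent equation rather than the subsequent algebra. Three ingredients have to be controlled: (i) one must check that in the symmetric phase and in the deep UV regime all non-melonic insertions into the four-point function are subdominant, so that $\Gamma^{(4)}$ may be replaced by $\Gamma^{(4)}_{\text{melo}}$ for the purposes of the structure equation; (ii) one must show, using the recursive definition of melons as in Proposition \ref{propmelons}, that opening a melonic four-point diagram at the appropriate internal face yields exactly a bare vertex glued through a single colored loop to another melonic four-point function, with no residual multi-loop contributions; and (iii) one must verify the prefactor $2$ by carefully enumerating the Wick contractions at the distinguished external vertex and by matching $\pi_{00}$ with $2\lambda_{41}(s)$ through the renormalization condition \eqref{rencond}. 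Once these three points are secured, the structure equation reduces to a one-line geometric resummation, and the $s$-derivative yields the flow form directly.
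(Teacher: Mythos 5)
Your proposal is correct and follows essentially the same route as the paper: the linear self-consistent equation $\lambda_{41}(s)=u-2u\,\mathcal{A}_s\lambda_{41}(s)$ is precisely the paper's recursive melonic ladder relation resummed as a geometric series (with your bare vertex $u$ playing the role of $Z_\lambda\lambda_{41}^r$), and your elimination of $u$ through the condition $\lambda_{41}(s\to-\infty)=\lambda_{41}^r$ reproduces the paper's determination $Z_\lambda=(1-2\lambda_{41}^r\mathcal{A}_{-\infty})^{-1}$ before substituting back to get $\lambda_{41}(s)=\lambda_{41}^r/(1+2\lambda_{41}^r\bar{\mathcal{A}}_s)$ and, by differentiation, $\dot{\lambda}_{41}=-2\lambda_{41}^2\dot{\mathcal{A}}_s$. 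The combinatorial points you flag (melonic dominance, single-bubble gluing, and the factor $2$ matching $\pi_{00}=2\lambda_{41}(s)$) are exactly what the paper secures via its Proposition on melonic diagram structure and the graphical recursion.
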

The constraint providing from the Ward identity, which relies the  $\beta$-functions and the anomalous dimension is given by: 
\bea\label{contrainte1}
\mathcal{C}(\bar{\lambda},\bar{m}^2):=\beta_{{41}}+\eta\bar\lambda_{41}\Big(1-\frac{\bar\lambda_{41}\pi^2}{(1+\bar m^2)^2}\Big)-\frac{2\bar{\lambda}_{41}^2\pi^2}{(1+\bar m^2)^3}\beta_m=0
\eea
This relation need to be taken into account in the Wetterich flow equation and therefore in the search of fixed point. To prove this relation, let us consider the
 derivative of $Z$ with respect to $s$ using expression  \eqref{Wardutile} and \eqref{structure}:
\begin{equation}\label{derivative1}
\dot{Z}=(Z_{-\infty}-2\lambda_{41} Z_{-\infty}\mathcal{L}_s)\frac{ \dot{\lambda}_{41}}{\lambda_{41}}-2Z_{-\infty} \dot{\Delta}_s\,\lambda_{41}.
\end{equation}
In the above relation  we have  used the  decomposition of  $\mathcal{L}_s=\mathcal{A}_s+\Delta_s$.  Remark that the Ward identity \eqref{Wardutile} can be written as 
$2\lambda_{41}\mathcal{L}_s=1-\bar Z$ where $\bar Z=Z/Z_{-\infty}$.
Then \eqref{derivative1} becomes:
\beq\label{ddddx}
\frac{\dot{Z}}{Z}=\frac{\dot{\lambda}_{41}}{\lambda_{41}}-2\frac{Z_{-\infty}}{Z}\dot{\Delta}_s\lambda_{41}.
\eeq
We now use the dimensionless quantities $\bar m$, $\bar \lambda_{41}$, $\bar B_s$ such that $\Delta_s=\frac{\bar Z}{Z^2}\bar B_s$ and reexpressing \eqref{ddddx} as:
\bea
\beta_{41}=-\eta\bar\lambda_{41}+2\bar\lambda_{41}(-\eta \bar{B}_s+\dot{\bar{B}}_s)
\eea
where $\bar{B}_s$ and $\dot{\bar{B}}_s$ much be simply compute using the integral representation of the sum. We come to:
\bea
\bar{B}_s=-\frac{\pi^2}{2(1+\bar m^2)^2},\quad \dot{\bar{B}}_s=\frac{\pi^2\beta_m}{(1+\bar m^2)^3},
\eea
and therefore \eqref{contrainte1} is well given. It is time to prove that this constraint violate the existence of the fixed point $p_+$. Let $p$ is a arbitrary fixed point of the theory. We get $\beta_m(p)=0=\beta_{41}(p)=0.$ Then the constraint \eqref{contrainte1} implies that at the point $p$ we get 
\bea
\eta\bar\lambda_{41}\Big(1-\frac{\bar\lambda_{41}\pi^2}{(1+\bar m^2)^2}\Big)(p)=0.
\eea
The particular solution $\bar\lambda_{41}=0$ correspond to the Gaussian fixed point. For $\bar\lambda_{41}\neq 0$ we have only
\bea\label{contnewt}
\eta=0,\mbox{ or } \frac{\bar\lambda_{41}\pi^2}{(1+\bar m^2)^2}=1.
\eea
It is clear that the fixed point $p_+=(-0.55, 0.0025)$, $\eta\approx 0.7$ violate these constraints i.e. does not satisfied the contraint equation \eqref{contnewt}. The same conclusion can be made for all choice of the regulator see \cite{Lahoche:2018ggd}. Finally it is possible to improve the truncation by using the so called effective vertex expansion. In this case, the fixed point obtained by solving the flow equation also violate the Ward constraint \eqref{contnewt}. We will study this point in the next section.
\section{Effective vertex expansion method for the melonic sector}\label{sec4}
The effective vertex-expansion described in \cite{Lahoche:2018ggd}-\cite{Lahoche:2018vun} allows to establish the structure of the Feynman graphs of our models and leads to the structure  equations in the leading order sector.  It can help to establish the flow equations without truncation. The Feynman graphs of the colored tensor model are $(d+1)$-colored graphs \cite{Gurau:2011xq}-\cite{Gurau:2013pca}. For the sake
of completeness, we remind here a  few facts about these graphs,
their representation as stranded graphs and their uncolored version. The graphs that we consider possibly bear external edges,
that is to say half-edges hooked to a unique vertex. We denote
$\cG$ a colored graph, $\cL(\cG)$ 
the set of its internal edges ($L(\cG)=|\cL(\cG)|$). A colored graph is said closed if it has no external edges and open otherwise.
  Let $\cG$ be a $(d+1)$-colored graph and $S$ a subset of
  $\{0,\dots,d\}$. We note $\cG^{S}$ the spanning subgraph of $\cG$  induced by the edges of colors in $S$.
 Then for all $0\leq i,j\leq  d$, $i\neq j$, a face of colors $i,j$ is a connected component of $\cG^{\{i,j\}}$. A face is open (or external) if it contains an external edge and
closed (or internal) otherwise. The set of closed faces of a graph $\cG$ is
written $\cF(\cG)$ ($F(\cG)=|\cF(\cG)|$). The structure of the boundary graph of $\cG$ denoted by $\partial\cG$ will be useful in the construction of the leading order contribution which may be considered  in the derivative expansion to compute the structure equations and therefore  the flow equations.
\begin{definition}
Consider $\mathcal{G}$  as a connected Feynman graph with $2N$ external
edges. The boundary graph $\partial\mathcal{G}$ is obtained from
$\mathcal{G}$ keeping only the external blacks and whites nodes
hooked to the external edges, connected together with colored
edges following the path drawn from the boundaries of the
external faces in the interior of the graph $\mathcal{G}$.
$\partial\mathcal{G}$ is then a tensorial invariant itself with
$N$ blacks (resp. whites) nodes. An illustration is given on
Figure \eqref{fig2}.
\end{definition}
\begin{center}
$\vcenter{\hbox{\includegraphics[scale=0.7]{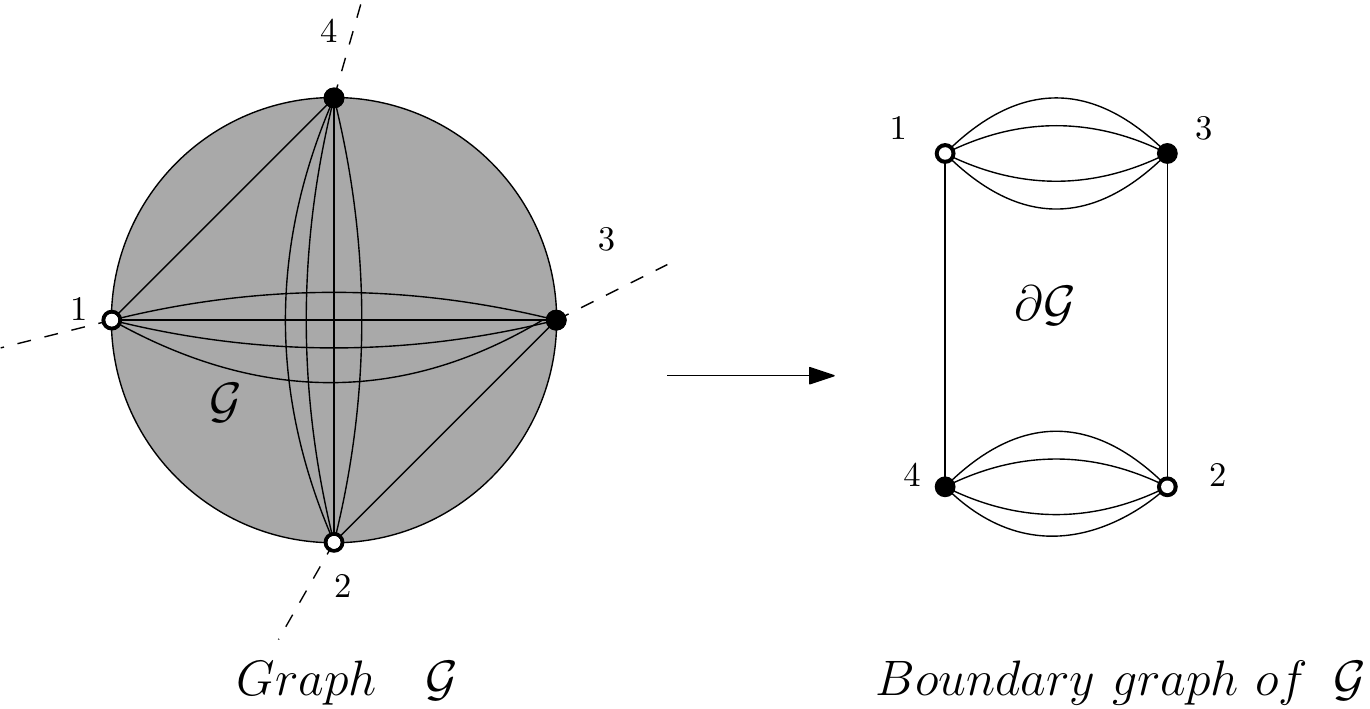} }} $
\captionof{figure}{An opening Feynman graph with $4$ external
edge and its boundary graph. The strand in the interior of
$\mathcal{G}$ represent the path following by the external
faces.}\label{fig2}
\end{center}

 The power counting theorem of  these models show that the divergence degree of arbitrary Feynman graph $\cG$ is
\bea
\omega(\cG)=-2L(\cG)+F(\cG)
\eea
The topological operation on the edge of the graph $\cG$ such as contraction is studied extensively in a lot of literatures. We let the reader consult  \cite{Gurau:2011xq}-\cite{Gurau:2013pca} and references therein. This operation plays 
  an important role in the power counting theorem and allowed to identify the structure of the graph. It makes the connection between the divergence degree of $\cG$ and the spanning tree denoted by $\mathcal T$.  Let "$\setminus$" is the operation of contraction, we get the following proposition:
\begin{proposition}
Under the contraction of the spanning tree edge the number of internal faces is invariant i.e.
$
F(\cG)=F(\cG\setminus\mathcal T).
$
The graph $\cG\setminus\mathcal T$ is called the rosette.
\end{proposition}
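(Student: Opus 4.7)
The plan is to prove the identity $F(\mathcal{G})=F(\mathcal{G}\setminus\mathcal{T})$ via homotopy invariance of the Euler characteristic under successive contraction of the $V(\mathcal{G})-1$ edges of the spanning tree $\mathcal{T}$. I would begin by promoting $\mathcal{G}$ to a two-dimensional CW complex $K(\mathcal{G})$ whose $0$-, $1$- and $2$-cells are respectively the vertices, the edges, and the internal faces of $\mathcal{G}$, each face being attached along its bicoloured boundary cycle in $\mathcal{G}^{\{i,j\}}$. By construction
\[
\chi\bigl(K(\mathcal{G})\bigr)=V(\mathcal{G})-L(\mathcal{G})+F(\mathcal{G}).
\]

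Next I would show that the contraction of a single edge $e\in\mathcal{T}$ is a homotopy equivalence $K(\mathcal{G})\to K(\mathcal{G}/e)$, obtained by collapsing the contractible subcomplex consisting of $e$ together with its two endpoints $v_1,v_2$. Two facts make this work. First, the tree hypothesis guarantees that $e$ has distinct endpoints (a self-loop would create a cycle inside $\mathcal{T}$), so the $1$-cell is a genuine interval and its collapse is a deformation retract on the $1$-skeleton. Second, because $\mathcal{G}$ is a bipartite $(d+1)$-coloured graph, the boundary of every internal face of colours $\{i,j\}$ is a \emph{simple} cycle of $\mathcal{G}^{\{i,j\}}$, traversing $e$ at most once; hence no $2$-cell is collapsed by the contraction — each face whose boundary contained $e$ is merely shortened by one edge, and the remaining faces are untouched. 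Homotopy invariance of $\chi$ then yields
\[
V-L+F=\chi\bigl(K(\mathcal{G})\bigr)=\chi\bigl(K(\mathcal{G}/e)\bigr)=(V-1)-(L-1)+F(\mathcal{G}/e),
\]
so $F(\mathcal{G}/e)=F(\mathcal{G})$. Iterating along the $V-1$ edges of $\mathcal{T}$ collapses $\mathcal{G}$ to the rosette $\mathcal{G}\setminus\mathcal{T}$, a single vertex carrying $L-V+1$ self-loops, with the same number of internal faces.

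To make the local effect of the contraction transparent I would also translate it into the stranded representation used in the rest of the section: each edge of colour $i$ carries $d$ strands labelled by the pairs $(i,j)$ with $j\neq i$, and an internal face is a closed cycle of strands routed through the vertices. Contracting $e$ then amounts to erasing its $d$ strands and splicing together, inside the merged vertex, the pairs of half-strands that were linked through them. The main technical obstacle is precisely this combinatorial bookkeeping: verifying that the splicing is a colour-preserving bijection compatible with the vertex routings at $v_1$ and $v_2$, so that every closed strand cycle of $\mathcal{G}$ descends to a closed strand cycle of $\mathcal{G}/e$, and vice versa. Once this compatibility is settled, the identity $F(\mathcal{G}/e)=F(\mathcal{G})$ also follows by an explicit bijection between internal faces, recovering combinatorially the topological statement above.
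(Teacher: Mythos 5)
The paper itself offers no proof of this statement — it defers to the standard tensor-model literature, where the argument is exactly the strand-reconnection bijection you only sketch at the very end. Measured against that, your proposal contains the right key idea but does not carry it out, and the part you do develop cannot substitute for it. Collapsing the contractible subcomplex $\{v_1,e,v_2\}$ is indeed a homotopy equivalence, but the resulting cell count is $(V-1)-(L-1)+F=V-L+F$ \emph{whatever} $F$ is, so invariance of the Euler characteristic carries no information about $F(\cG\setminus e)$: to extract $F(\cG\setminus e)=F(\cG)$ you must first identify the quotient complex $K(\cG)/e$ with $K(\cG\setminus e)$, i.e. you must already know that the images of the face cycles of $\cG$ are precisely the internal faces of the contracted graph, with none merged, split or destroyed. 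That identification is literally the proposition being proved, and once it is in hand the conclusion follows by counting $2$-cells, with no mention of $\chi$ at all. Note also that after contracting a dotted (propagator) edge the object is no longer a $(d+1)$-colored graph in the strict sense (the merged vertex carries two half-edges of each remaining color), so "internal face of $\cG\setminus e$" has to be read off the stranded structure — which forces you back to the combinatorial route in any case. So the topological layer is circular as a derivation and redundant once the combinatorial input is supplied.

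That combinatorial input — delete the $d$ strands of $e$ and splice, color by color, the two strand ends that were joined through $e$; every face through $e$ is shortened but remains a closed cycle (a length-two face becomes a self-loop strand, it does not disappear), faces avoiding $e$ are untouched, and distinct faces remain distinct because the splicing only rejoins ends that already lay on the same face — is the entire content of the proposition and is the standard proof in the references the paper cites. In your write-up it appears only as "the main technical obstacle \dots once this compatibility is settled", i.e. it is deferred rather than done. Carrying it out explicitly, together with the easy remark that contracting a tree edge leaves the remaining tree edges non-self-loops (so the step can be iterated over the $V-1$ edges of $\mathcal{T}$ down to the rosette), would give a complete and correct proof; the CW-complex packaging can then simply be dropped.
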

Note that the contraction of the edge $e\in\mathcal{L}(\cG)$, which leads  to the corresponding graph $\bar\cG=\cG\setminus \{e\}$ is such that
$
\omega(\cG)=\omega(\bar\cG)-2(V-1),
$
and   the divergent degree of the rosette can be easly computed,  using the following formula corresponding to the contraction of $k$-dipole: $\omega(\cG)=-2L+k(L-V+1).$  Then
 the arbitrary Feynman graph $\cG$ is melonic if its boundary graph have the elementary melon structure i.e.  the number of face is maximal:
\bea
F(\cG_{melon})=(d-1)(L-V+1).
\eea
Due to the existence of the $1/N$-expansion of tensors models ($N$ denoting the size of the tensor) which provides in return a topological expansion of the partition function in terms of the generalization of genus called Gurau number $\varphi$,
does not yield a topological expansion but rather 
a combinatorial expansion in terms of the degree
of the graph.  For a colored closed graph $\cG$, the degree $\varpi(\cG)$ is such that
 for the melon $\varpi(\cG_{melon})=0$. 

\subsection{Structure equations and compactibility with Ward-identities}
The Structure equations  is the relations between correlation function and allows to establish a constraint between  $\beta$-functions for mass,   interactions couplings and wave function renormalization. These relations are obtained in the deep UV limit (i.e. in the domain $1\ll e^s\ll \Lambda$) without any assumption about the $\beta$-functions and without any truncation of the effective action $\Gamma_s$. The only assumption concern the choice of the initial conditions, ensuring the perturbative consistency of the full partition function.
The first structure equation concern the self energy (or 1PI $2$-point functions). It takes  place as the \textit{closed equation} for self energy. \footnote{The rank of the tensors is fixed to $5$, and we denote it by $d$ to clarify the proof(s).} Let us summarize in the following proposition
\begin{proposition}
In the melonic sector, the self energy $\Sigma_s(\vec{p}\,)$ is given by the \textit{closed equation} which takes into account the  effective coupling $\lambda_{41}(s)$ as:
\begin{equation}\label{eq2points}
-\Sigma_s(\vec{p}\,)=2\lambda_{41}^rZ_\lambda\sum_{\vec{q}} \left(\sum_{i=1}^d \delta_{p_iq_i}\right)G_s(\vec{q}\,)\,.
\end{equation}
In the same way,  in the melonic sector, the perturbative zero-momenta 1PI four-point contribution $\Gamma^{(4),i}_{s,\vec 0\vec 0;\vec 0\vec 0}$ is given by:
\begin{equation}
\Gamma^{(4),i}_{s,\vec 0\vec 0;\vec 0\vec 0}=2\pi_{00}=\frac{4Z_\lambda\lambda_{41}^r}{1+2\lambda_{41}^r Z_\lambda \mathcal A_s}\,,
\end{equation}
where $\mathcal A_s$ is defined as:
\begin{equation}
\mathcal A_s=\sum_{\vec p_{\bot}}[G_s(\vec p_{\bot})]^2\,,\,\vec{p}_{\bot} := (0,p_1,\cdots,p_d)\,,
\end{equation}
$G_s(\vec{p})$ being the effective propagator :  $G_s^{-1}(\vec p\,)=Z_{-\infty}\vec p\,^2+m^2+r_s(\vec p\,)-\Sigma_s(\vec p\,)\,.$ Let us recall that $Z_{-\infty}$ and $m_0$ are the counter-terms discarding the UV divergences of the original partition function, the initial conditions in the UV are given such that the classical action contain only renormalizable interactions. \\
\end{proposition}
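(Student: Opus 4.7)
The plan is to establish both equations as Dyson--Schwinger--type closed identities following from the recursive combinatorial structure of melonic diagrams, rather than by trying to resum the perturbative series directly. The common mechanism is: in a purely quartic melonic model, any $2N$--point $1$PI melonic graph can be decomposed by peeling off an outermost vertex, and what remains is itself a melonic object that one identifies either with the full effective propagator $G_s$ or with $\Gamma^{(4),i}$ itself. This is exactly the content of Proposition \ref{propmelons}, which I would use as the combinatorial backbone throughout.

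For the self-energy equation, I would argue that any $1$PI melonic $2$-point diagram has the following canonical form: by Proposition \ref{propmelons}, its unique external vertex carries both external half-edges and shares $d-1$ external faces of length one with them; the two remaining half-edges of this vertex must therefore close into a tadpole loop, forming one internal face of some fixed color $i$, while the $d-1$ other strands of that loop are shared with the external faces of length one, enforcing the delta constraints $\prod_{j\neq i}\delta_{p_jq_j}$ after summation. The interior of the tadpole is then an arbitrary melonic $2$-point insertion, which when fully resummed amounts to replacing the bare propagator by the full effective propagator $G_s$. Summing over the $d$ admissible colors of the tadpole loop, and tracking the vertex weight $-2\lambda_{41}^r Z_\lambda$ together with the Wick symmetry factor, one recovers \eqref{eq2points}.

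For the four-point structure equation I would use the chain decomposition in \eqref{decomp4}. A monocolor melonic $4$-point contribution $\Gamma^{(4),i}$ is a ladder of vertex bubbles, each successive pair connected by two effective propagators whose loop carries the internal face of color $i$ responsible for $\mathcal{A}_s$. Isolating the leftmost vertex yields the closed Dyson-type relation
\begin{equation}
\pi_{00}\;=\;2 Z_\lambda \lambda_{41}^r\;-\;2 Z_\lambda \lambda_{41}^r\,\mathcal{A}_s\,\pi_{00},
\end{equation}
whose first term is the bare contribution and whose second term is the "one--rung" correction, in which the isolated vertex is followed by a loop of value $\mathcal{A}_s$ and by the remaining chain $\pi_{00}$. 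Solving this linear equation, using $\Gamma^{(4),i}_{s,\vec 0 \vec 0;\vec 0 \vec 0}=2\pi_{00}$ from \eqref{decomp4}, and finally identifying the effective coupling as $\lambda_{41}(s)=\lambda_{41}^r/(1+2\lambda_{41}^r Z_\lambda\mathcal{A}_s)$ produces the geometric resummation stated in the proposition.

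The main obstacle is purely combinatorial: I must justify that the recursion really closes on the full effective propagator and on $\pi_{00}$ itself, with no contributions missed or double-counted. Concretely, one has to show (i) that in the melonic class every $1$PI $2$-point insertion in the tadpole loop is arbitrary, so resumming them reconstitutes exactly $G_s$, and (ii) that no "crossed'' or "non-ladder'' $4$-point topology survives the power-counting $\varpi(\mathcal{G})=0$ constraint, so the leftmost-vertex peeling is exhaustive. Both points reduce to verifying that cutting an internal edge of the distinguished vertex preserves membership in the melonic class characterized by Proposition \ref{propmelons}. A secondary care is required for the numerical factors: the factor $2$ between $\pi_{00}$ and $\lambda_{41}$ coming from the two Wick orientations of \eqref{decomp4}, the combinatorial weight $4$ for the four ways to hook external fields to external vertices, and the consistent treatment of $Z_\lambda$ as the vertex wave-function rescaling sitting in front of $\lambda_{41}^r$.
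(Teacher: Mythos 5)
Your argument for the four-point part is essentially the paper's own proof: the authors also single out the boundary vertices using the face-connectivity of melonic graphs (Proposition \ref{propmelons}), peel off an external vertex to obtain a recursion for the amputated kernel, and resum the resulting ladder as a geometric series whose elementary rung is the loop $-2Z_\lambda\lambda_{41}^r\mathcal{A}_s$; your Dyson-type relation $\pi_{00}=2Z_\lambda\lambda_{41}^r-2Z_\lambda\lambda_{41}^r\mathcal{A}_s\,\pi_{00}$ is just the resolved form of their graphical recursion, and together with $\Gamma^{(4),i}_{s,\vec 0\vec 0;\vec 0\vec 0}=2\pi_{00}$ from \eqref{decomp4} it reproduces their result. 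Where you genuinely differ is on the closed equation \eqref{eq2points}: the paper does not prove it at all but refers the reader to \cite{Samary:2014tja}, whereas you sketch the standard tadpole-peeling argument (external vertex, loop of fixed color $i$ carrying the delta constraints, interior two-point insertions resummed into the full $G_s$, sum over the $d$ colors). That sketch is the right mechanism and is what the cited reference establishes, so supplying it makes your proof more self-contained than the paper's; you correctly flag that the nontrivial points are the exhaustiveness of the peeling within the melonic class and the fact that the loop dresses to the full effective propagator (note $G_s$ here contains the regulator $r_s$ and the melonic self-energy itself, so \eqref{eq2points} is a self-consistent equation, not an explicit formula). One small correction: your closing identification $\lambda_{41}(s)=\lambda_{41}^r/(1+2\lambda_{41}^r Z_\lambda\mathcal{A}_s)$ is not quite the paper's relation; the correct statement, equation \eqref{effectivecoupling}, is $\lambda_{41}(s)=\lambda_{41}^r/(1+2\lambda_{41}^r\bar{\mathcal{A}}_s)$ with $\bar{\mathcal{A}}_s=\mathcal{A}_s-\mathcal{A}_{s=-\infty}$, and it requires the value of $Z_\lambda$ fixed by the renormalization conditions in Proposition \ref{propcounterterms}; this is however a later consequence and is not needed for the proposition you were asked to prove.
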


\noindent
\begin{proof} Concerning the proof of relation \eqref{eq2points} we let the reader to consult the  reference \cite{Samary:2014tja}.
 Let us define $4Z_\lambda\lambda_{41}^r\Pi$ as the zero momenta melonic $4$-points functions made into  the graphs for which two vertices maybe singularized (i.e. by graphs which are at least of order $2$ in the perturbative expansion). We have\footnote{The notations are similar to the ones used for the previous proof. The context however allows to exclude any confusion.}:
\begin{equation}
2\pi_{00}=:4Z_\lambda\lambda_{41}^r(1+\Pi).
\end{equation}
Because of the face connectivity of the melonic diagrams, the boundary vertices may be such that the two internal faces of the same color running on the interior of the diagrams building $\Pi$ pass through of them. Then  we have the following structure:
\begin{equation}
-4Z_\lambda\lambda_{41}^r\Pi=\vcenter{\hbox{\includegraphics[scale=0.7]{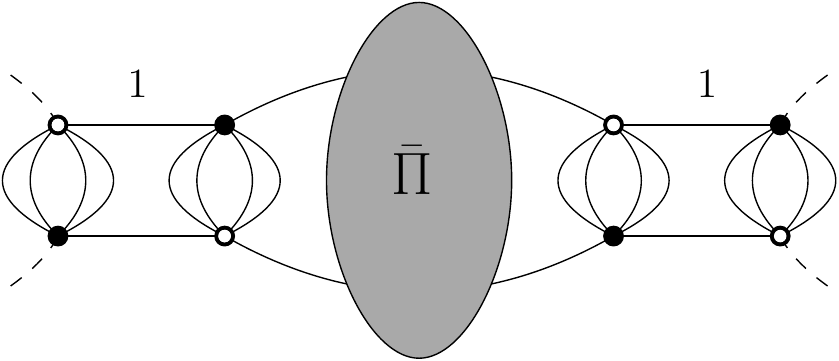} }}, 
\end{equation}
where the grey disk is a sum of Feynman graphs. Note that it is the only configuration of the external vertices in agreement with the assumption that $\Pi$ is building with the melonic diagrams. Any other configurations of the external vertices are not melonics. At the lowest order, the grey disk corresponds to propagator lines, 
\begin{equation}
-4Z_\lambda\lambda_{41}^r\Pi^{(2)}=8Z_\lambda^2 (\lambda_{41}^r)^2 \mathcal{A}_s\vert_{\lambda_{41}^r=0}\equiv\vcenter{\hbox{\includegraphics[scale=0.7]{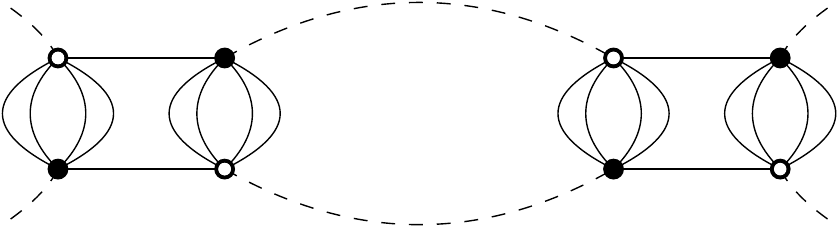} }} \,.
\end{equation}
Note that,the external faces have the same color. Now, we can extract the amputated component of $\bar{\Pi}$, say $\bar{\Pi}^{\prime}$ (which contains at least one vertex, and is irreducible by hypothesis) extracting the effective melonic propagators connected to the  dotted lines linked to $\bar{\Pi}$. We get:
\begin{equation}
-4Z_\lambda\lambda_{41}^r\Pi=\vcenter{\hbox{\includegraphics[scale=0.7]{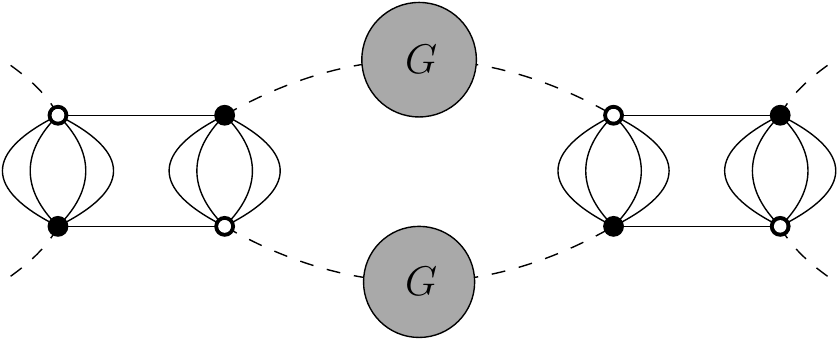} }} +\vcenter{\hbox{\includegraphics[scale=0.7]{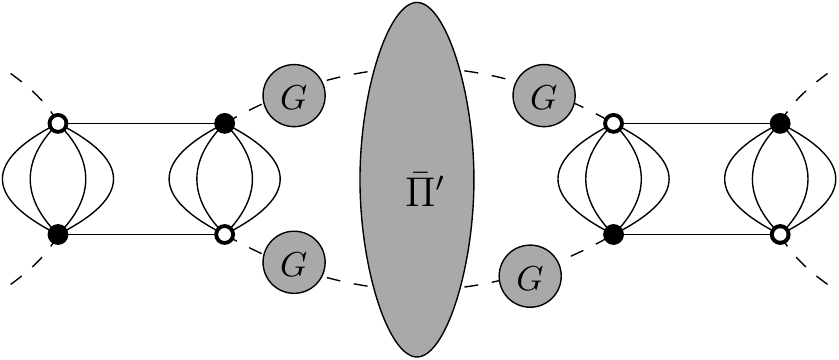} }}\,.
\end{equation}
At first order, $\bar{\Pi}^{\prime}$ is built with a single vertex, and there are only one configuration in agreement with the melonic structure, i.e. maximazing the number of internal faces. The higher order contributions contain at least two vertices, and the argument may be repeated so that the function $\bar{\Pi}^{\prime}$ appears. Finally we deduce the closed relation:
\begin{equation}
\vcenter{\hbox{\includegraphics[scale=0.5]{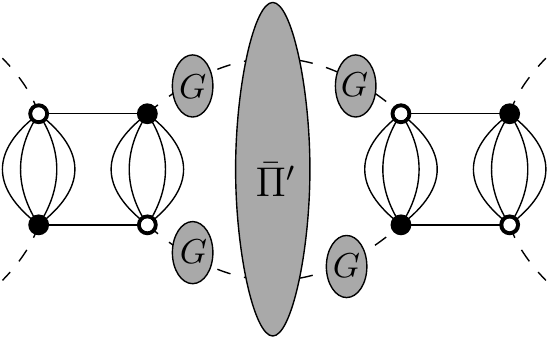} }}=\vcenter{\hbox{\includegraphics[scale=0.5]{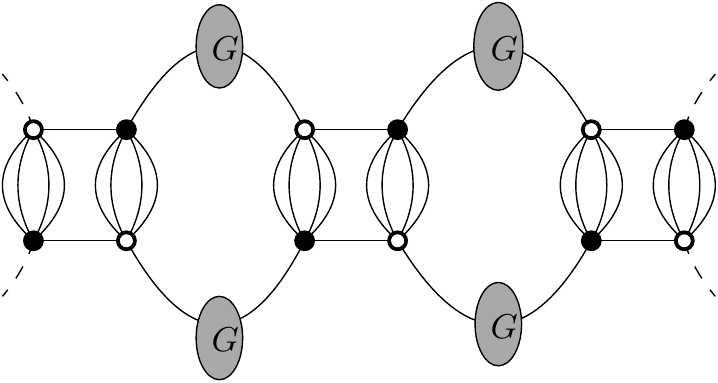} }}+\vcenter{\hbox{\includegraphics[scale=0.5]{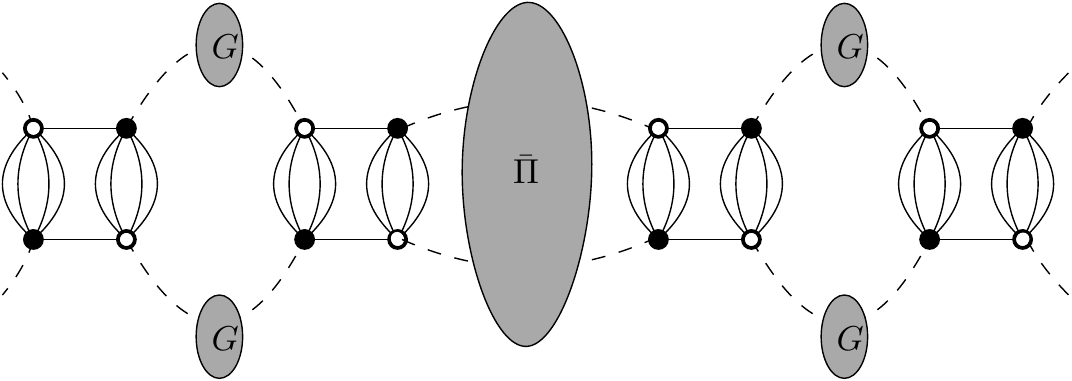} }}\,. \label{recursion1}
\end{equation}
This equation can be solved recursively as an infinite sum
\begin{equation}
-4Z_\lambda\lambda_{41}^r\Pi=\vcenter{\hbox{\includegraphics[scale=0.6]{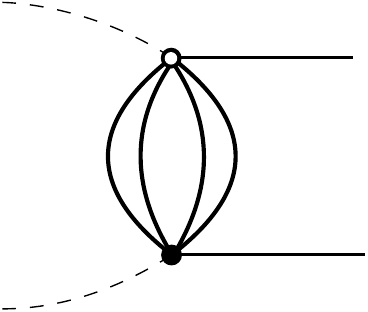} }} \left\{\sum_{n=1}^\infty \left(\vcenter{\hbox{\includegraphics[scale=0.6]{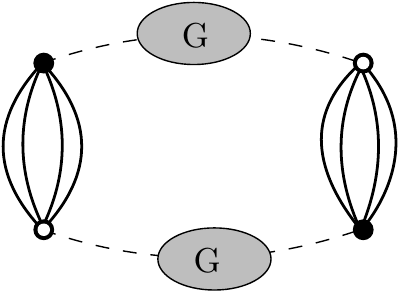} }}\right)^n \right\} \vcenter{\hbox{\includegraphics[scale=0.6]{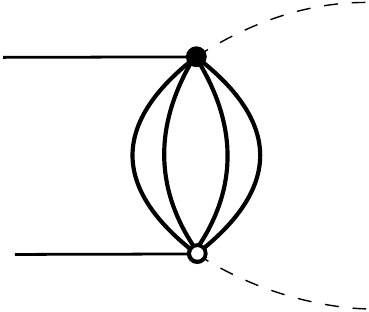} }}\,,
\end{equation}
which can be formally solved as
\begin{equation}
2\pi_{00}= 4Z_\lambda\lambda_{41}^r\left(1-\vcenter{\hbox{\includegraphics[scale=0.5]{pimiddle.pdf} }}\right)^{-1}\,.
\end{equation}
The loop diagram $\vcenter{\hbox{\includegraphics[scale=0.3]{pimiddle.pdf} }}$ maybe easily computed recursively from the  definition of melonic diagrams, or directly using Wick theorem  for a one-loop computation with the effective propagator $G$. The result is:
\begin{equation}
\vcenter{\hbox{\includegraphics[scale=0.5]{pimiddle.pdf} }}=-2Z_\lambda \lambda_{41}^r \mathcal{A}_s\,,
\end{equation}
and the proposition is proved. \\
\end{proof}
Note that this construction can be  easily cheeked to be compatible with Ward identity, especially in the form \eqref{diage}. Conversely, the last result may be derived directly from the equation \eqref{diage} and from the closed equation for the $2$-point function \eqref{eq2points} (see \cite{Lahoche:2018vun}). To prove these two results we only assume that the classical mean field vanish, we deduce from our previous proof, essentially based on the assumption that the effective vertices are analytic with respect to the renormalized coupling, that  the analytic domain cover what we called symmetric phase. 
In the hope to  extract the expression of the counter-terms at all orders and  to show that the wave function renormalization and the $4$-points vertex renormalization are the same. We have the following result:
\begin{proposition}\label{propcounterterms}
Choosing the following renormalization prescription:
\begin{equation}
\Gamma^{(4),1}_{s=-\infty,\vec 0\vec 0;\vec 0\vec 0}=4\lambda_{41}^r\,\,;\,\, \Gamma^{(2)}_{s=-\infty}(\vec{p}\,)=m_r^2+\vec{p}\,^2+\mathcal{O}(\vec{p}\,^2)\,,
\end{equation}
where $m_r^2$ and $\lambda_{41}^r$ are the renormalized mass and coupling constant; the counter-terms are given by:
\begin{equation}
Z_\lambda=\frac{1}{1-2\lambda_{41}^r\mathcal A_{s=-\infty}},\,\,;\,\, Z_{-\infty}=Z_\lambda \,\,;\,\, m^2=m_r^2+\Sigma_{s=-\infty}(\vec{p}=0)\,,
\end{equation}
where $\Sigma_s$ denote the melonic self-energy. 
\end{proposition}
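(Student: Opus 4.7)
The plan is to solve the three renormalization prescriptions for the three counter-terms $Z_\lambda$, $Z_{-\infty}$ and $m^2$, using the closed equation for $\Gamma^{(4),1}_s$, the Ward identity \eqref{Wardutile}, and the definition of $G_s$. Since all three ingredients have already been derived, the argument reduces to straightforward algebra once everything is evaluated at $s=-\infty$.

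First I would extract $Z_\lambda$. Setting $s=-\infty$ in the closed equation
\begin{equation*}
\Gamma^{(4),1}_{s,\vec 0\vec 0;\vec 0\vec 0} = \frac{4Z_\lambda\lambda_{41}^r}{1+2\lambda_{41}^r Z_\lambda \mathcal A_s}
\end{equation*}
and imposing the prescription $\Gamma^{(4),1}_{s=-\infty,\vec 0\vec 0;\vec 0\vec 0}=4\lambda_{41}^r$ leads immediately to $Z_\lambda (1-2\lambda_{41}^r\mathcal A_{s=-\infty})=1$, that is, the announced formula for $Z_\lambda$.

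Second, for the identification $Z_{-\infty}=Z_\lambda$, I would evaluate \eqref{Wardutile}, namely $2Z_{-\infty}\mathcal L_s\lambda_{41}(s) = Z_{-\infty}-Z$, at $s=-\infty$. The boundary behaviour of the regulator forces $r_{s=-\infty}\equiv 0$ and hence $\partial r_{s=-\infty}/\partial p_1^2\equiv 0$, so that $\mathcal L_{s=-\infty}=\mathcal A_{s=-\infty}$. The structure equation \eqref{structure} yields $\lambda_{41}(s=-\infty)=\lambda_{41}^r$ (because $\bar{\mathcal A}_{s=-\infty}=0$), while the prescription on $\Gamma^{(2)}_{s=-\infty}$ identifies the running $Z(s=-\infty)$ with unity. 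Substituting, the Ward identity reduces to $2Z_{-\infty}\mathcal A_{s=-\infty}\lambda_{41}^r = Z_{-\infty}-1$, which rearranges to $Z_{-\infty}=(1-2\lambda_{41}^r\mathcal A_{s=-\infty})^{-1}=Z_\lambda$.

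Third, for the mass counter-term, I would use $\Gamma^{(2)}_s = G_s^{-1}-r_s$ together with the definition of $G_s$, giving $\Gamma^{(2)}_{s=-\infty}(\vec p) = Z_{-\infty}\vec p^{\,2}+m^2-\Sigma_{s=-\infty}(\vec p)$. Evaluating at $\vec p=0$ and equating to $m_r^2$ yields $m^2 = m_r^2 + \Sigma_{s=-\infty}(\vec p=0)$.

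The most delicate point is the simultaneous consistency of $Z(s=-\infty)=1$ (the coefficient of $\vec p^{\,2}$ in $\Gamma^{(2)}_{s=-\infty}$) with $Z_{-\infty}=Z_\lambda\neq 1$: matching the $\vec p^{\,2}$ coefficient forces $\partial_{p^2}\Sigma_{s=-\infty}(0) = Z_{-\infty}-1 = Z_\lambda -1$. This equality is precisely what the Ward identity enforces between the self-energy derivative and the four-point vertex; the proof therefore ultimately rests on the Ward-identity compatibility of the melonic structure equations already derived above. Once that compatibility is taken for granted, everything else is transparent algebra.
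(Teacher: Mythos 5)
Your proof is correct, and for the central identification $Z_{-\infty}=Z_\lambda$ it takes a genuinely different route from the paper. The paper never invokes the Ward identity \eqref{Wardutile} inside this proof: it writes the Taylor expansion $\Gamma^{(2)}_{s=-\infty}(\vec p\,)=(Z_{-\infty}-\Sigma'_{s=-\infty}(0))\vec p\,^2+m^2-\Sigma_{s=-\infty}(\vec 0)+\mathcal{O}(\vec p\,^2)$, reads off $Z_{-\infty}-\Sigma'_{s=-\infty}(0)=1$ and $m^2-\Sigma_{s=-\infty}(\vec 0)=m_r^2$ from the prescription, and then obtains $1-Z_{-\infty}=-2\lambda^r_{41}Z_\lambda \mathcal{A}_{s=-\infty}$ by differentiating the \emph{closed equation} \eqref{eq2points} for the self-energy at zero momentum; combined with the explicit $Z_\lambda$ this gives $Z_{-\infty}=Z_\lambda$. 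You instead evaluate the Ward identity $2Z_{-\infty}\mathcal{L}_s\lambda_{41}=Z_{-\infty}-Z$ at $s=-\infty$, using $r_{s=-\infty}\equiv 0$ (so $\mathcal{L}_{-\infty}=\mathcal{A}_{-\infty}$), $Z(-\infty)=1$ and $\lambda_{41}(-\infty)=\lambda^r_{41}$. This is legitimate and is in fact the "converse" derivation the paper alludes to just before the proposition ("the last result may be derived directly from \eqref{diage} and from the closed equation"); it buys a derivation that makes the unitary-symmetry constraint manifest, whereas the paper's route stays entirely within the melonic structure equations and shows afterwards that they are WI-compatible. Your steps for $Z_\lambda$ (from the four-point structure equation at $s=-\infty$) and for $m^2$ (Taylor expansion at $\vec p=\vec 0$) coincide with the paper's. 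Two small cautions: (i) do not appeal to \eqref{structure} in the form $\lambda_{41}(s)=\lambda^r_{41}/(1+2\lambda^r_{41}\bar{\mathcal{A}}_s)$ to get $\lambda_{41}(-\infty)=\lambda^r_{41}$ — that form is obtained in the paper only \emph{after} this proposition (it uses $Z_{-\infty}=Z_\lambda$), so this would be mildly circular; the value $\lambda_{41}(-\infty)=\lambda^r_{41}$ follows directly from the prescription $\Gamma^{(4),1}_{s=-\infty}=4\lambda^r_{41}$ and the renormalization condition \eqref{rencond}, which is all you need. (ii) Your closing "delicate point" is overstated: the equality $Z_{-\infty}-\Sigma'_{s=-\infty}(0)=1$, i.e.\ $Z(-\infty)=1$, is simply the prescription on the $\vec p\,^2$ coefficient, not an extra consistency condition; the Ward identity (or, in the paper, the closed equation) is then the relation that converts it into the determination of $Z_{-\infty}$.
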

\begin{proof} From Proposition \ref{prop1}, we can get:
\begin{equation}
\Gamma^{(4),i}_{s,\vec 0\vec 0;\vec 0\vec 0}=\frac{4Z_\lambda\lambda_{41}^r}{1+2\lambda_{41}^r Z_\lambda \mathcal A_s}=\frac{4\lambda_{41}^r}{Z_\lambda^{-1}+2\lambda_{41}^r \mathcal A_s}\,.
\end{equation}
Then, setting $s=-\infty$, we deduce that
\begin{equation}
Z_\lambda^{-1}+2\lambda_{41}^r \mathcal A_{-\infty}=1\to Z_\lambda=\frac{1}{1-2\lambda_{41}^r\mathcal A_{-\infty}}\,.\label{explicitZlambda}
\end{equation}
We now concentrated our self  on to $Z_{-\infty}$ and $m^2$. Without lost of generality, the inverse of the effective propagator $\Gamma^{(2)}_s$ has the following structure:
\begin{align}
\Gamma_{s=-\infty}^{(2)}(\vec p\,)&=Z_{-\infty}\vec p\,^2 +m^2-\Sigma_{s=-\infty}(\vec p)\\
&=Z_{-\infty}\vec p\,^2 +m^2-\Sigma_{s=-\infty}(\vec 0) -\vec p\,^2\Sigma^\prime_{{s=-\infty}}(\vec 0)+\mathcal{O}(\vec{p}\,^2)\\
&=(Z_{-\infty}-\Sigma_{s=-\infty}^\prime(0))\vec p\,^2+m^2-\Sigma_{s=-\infty}(\vec 0)+\mathcal{O}(\vec{p}\,^2)
\end{align}
with the notation: $\Sigma^\prime(\vec{0}):= \partial \Sigma/\partial p_1^2(\vec{p}=\vec{0}\,)$. Then  from the renormalization conditions, we have :
\begin{equation}
Z_{-\infty}-\Sigma_{s=-\infty}^\prime(0)=1\,\,,\,\, m^2-\Sigma_{s=-\infty}(\vec 0)=m_r^2\,.
\end{equation}
Setting $s=-\infty$ in the closed equation for the $2$-point correlation function, and by deriving  with respect to $p_1$ for $\vec{p}=\vec{0}$, we get:
\begin{equation}
1-Z_{-\infty}=-2\lambda_{41}^r Z_\lambda  \mathcal{A}_{s=-\infty}\,.
\end{equation}
Using the explicit expression for $Z_\lambda$ in  \eqref{explicitZlambda}, we get finally:
\begin{equation}
(1-Z_{-\infty})(1-2\lambda_{41}^r  \mathcal{A}_{s=-\infty} )=-2\lambda_{41}^r   \mathcal{A}_{s=-\infty} \,\,\to\,\, Z_{-\infty}=Z_\lambda\,.
\end{equation}
\end{proof}
Now, consider the monocolor $4$-points function $\Gamma^{(4),i}_{s,\vec 0\vec 0;\vec 0\vec 0}$. If we replace $Z_\lambda$ by its expression from Proposition \ref{propcounterterms}, we deduce that
\begin{equation}
\Gamma^{(4),i}_{s,\vec 0\vec 0;\vec 0\vec 0}=\frac{4\lambda_{41}^r}{1+2\lambda_{41}^r  \bar{\mathcal A}_s}\,,
\end{equation}
with the definition: $\bar{\mathcal A}_s:= \mathcal A_s-\mathcal A_{s=-\infty}$. In other words, we have an explicit expression for the effective coupling $\lambda_{41}(s):=\frac{1}{4} \Gamma^{(4),i}_{s,\vec 0\vec 0;\vec 0\vec 0}$,
\begin{equation}
\lambda_{41}(s)=\frac{\lambda_{41}^r}{1+2\lambda_{41}^r  \bar{\mathcal A}_s}\,,\label{effectivecoupling}
\end{equation}
from which we get
\begin{equation}
\partial_s\lambda_{41}(s)=-\frac{2(\lambda_{41}^r)^2\dot{\mathcal A}_s}{(1+2\lambda_{41}^r\Delta \mathcal A_s)^2}=-2\lambda_{41}^2(s)\dot{\mathcal A}_s\,.
\end{equation}
In the above relation  we introduce the dot notation $\dot{\mathcal{A}}_s=\partial_s{\mathcal{A}}_s$
\begin{equation}
\mathcal A_s=\sum_{\vec p_{\bot}}\frac{1}{[\Gamma_s^{(2)}(\vec p_{\bot})+r_s(\vec p_{\bot})]^2},\quad 
\dot{\mathcal A}_s=-2\sum_{\vec p_{\bot}}\frac{\dot{\Gamma}_s^{(2)}(\vec p_{\bot})+\dot{r}_s(\vec p_{\bot})}{[\Gamma_s^{(2)}(\vec p_{\bot})+r_s(\vec p_{\bot})]^3}.
\end{equation}
In proposition \ref{propcounterterms} we have investigated the relations between counter-terms i.e. we have considered the melonic equations as Ward identities for $s=-\infty$. Far from the initial conditions, the Taylor expansion of the $2$-point function $\Gamma_s^{(2)}(\vec{p}\,)$ is written  as:
\begin{equation}
\Gamma_s^{(2)}(\vec p\,)=m_r^2+(\Sigma_s(\vec 0\,)-\Sigma_0(\vec 0\,))+(Z_{-\infty}-\Sigma_s'(\vec 0))\vec p\,^2+\mathcal{O}(\vec{p}\,^2)\,.
\end{equation}
We call the "physical" or \textit{effective} mass parameter $m^2(s)$ the first term in the above relation:
\begin{equation}
m^2(s):=m_r^2+(\Sigma_s(\vec 0\,)-\Sigma_0(\vec 0\,)),\,
\end{equation}
while the coefficient $Z_{-\infty}-\Sigma_s'(\vec 0)$ is the effective wave function renormalization and is denoted by $Z(s)$ i.e.
\begin{equation}
Z(s):=Z_{-\infty}-\Sigma_s'(\vec 0)\,.
\end{equation}
Now let us consider the closed equation given in proposition \ref{eq2points}. By  deriving  with respect to $p_1$ and by taking $\vec{p}=\vec{0}$, we get:
\begin{equation}
Z-Z_{-\infty}=-2\lambda_{41}^r Z_\lambda \sum_{\vec{p}_\bot} G^2_s(\vec{p}_\bot)(Z+r'_s(\vec{p}_\bot))\,.
\end{equation}
Using equation \eqref{effectivecoupling}, we can express $\lambda_{41}^r Z_\lambda$ in terms of the effective coupling $\lambda_{41}(s)$, and we get:
\begin{equation}
(Z-Z_{-\infty})(1-2\lambda_{41}(s)\mathcal{A}_s)=-2\lambda_{41}(s)\left(Z\mathcal{A}_s+\sum_{\vec{p}_\bot} G^2_s(\vec{p}_\bot)r'_s(\vec{p}_\bot)\right)\,,
\end{equation}
Then we come to the following relation
\begin{equation}
Z=Z_{-\infty}\left(1-2\lambda_{41}(s)\mathcal{L}_s\right)\,.\label{eqZ}
\end{equation}
At this stage, without all confusion let us clarify that: $Z_{-\infty}$ is the wave function counter-term i.e, whose divergent parts cancels the loop divergences, and whose finite part depend on the renormalization prescription. $Z(s)$ however is  fixing to be $1$ for $s=-\infty$ from our renormalization conditions. \\
\subsection{Flow equation from EVE method}
There are different methods to improve the crude truncations in the FRG literature. However, their applications for TGFTs remains difficult due to the non-locality of the interactions over the group manifold on which the fields are defined. A step to go out of the truncation method was done recently in \cite{Lahoche:2018oeo}-\cite{Lahoche:2018vun} with the \textit{effective vertex expansion} (EVE) method. Basically, the strategy is to close the infinite tower of equations coming from the exact flow equation, instead of crudely truncate them. To say more, the strategy is to complete the structure equation \eqref{structure} with a structure equation for $\Gamma^{(6)}$, expressing it in terms of the marginal coupling $\lambda$ and the effective propagator $G_s$ only. In this way, the flow equations around marginal couplings are completely closed. Note that this approach cross the first hypothesis motivating the truncation: We expect that so far from the deep UV, only the marginal interactions survive, and drag the complete RG flow. Moreover, any fixed point of the autonomous set of resulting equations are automatically fixed points for any higher effective melonic vertices building from effective quartic interactions. Finally, a strong improvement of this method with respect to the truncation method, already pointed out in \cite{Lahoche:2018oeo}-\cite{Lahoche:2018vun} is that it allows to keep the complete momenta dependence of the effective vertex. This dependence generate a new term on the right hand side of the equation for $\dot{Z}$, moving the critical line from its truncation's position. \\

\noindent
Let us consider the flow equation for $\dot{\Gamma}^{(2)}$, obtained from \eqref{Wetterich} deriving with respect to $M$ and $\bar{M}$:
\begin{equation}
\dot{\Gamma}^{(2)}(\vec{p}\,)=-\sum_{\vec{q}} \Gamma_{\vec{p},\vec{p},\vec{q},\vec{q}}^{(4)}\,G_s^2(\vec{q}\,) \dot{r}_s(\vec{q}\,)\,,\label{gamma2}
\end{equation}
where we discard all the odd contributions, vanishing in the symmetric phase. Deriving on both sides with respect to $p_1^{2}$, and setting $\vec{p}=\vec{0}$, we get:
\begin{equation}
\dot{Z}=-\sum_{\vec{q}} \Gamma_{\vec{0},\vec{0},\vec{q},\vec{q}}^{(4)\,\prime}\,G_s^2(\vec{q}\,)  \dot{r}_s(\vec{q}\,)-\Gamma_{\vec{0},\vec{0},\vec{q},\vec{q}}^{(4)}\,G_s^2(\vec{q}\,) \dot{r}_s(\vec{q}\,)\,,
\end{equation}
where the "prime" designates the partial derivative with respect to $p_1^{2}$. In the deep UV ($k\gg1$) the argument used in the $T^4$-truncation to discard non-melonic contributions holds, and we keep only the melonic diagrams as well. Moreover, to capture the momentum dependence of the effective melonic vertex $\Gamma_{\text{melo}}^{(4)}$ and compute the derivative $\Gamma_{\text{melo}\,,\vec{0},\vec{0},\vec{q},\vec{q}}^{(4)\,\prime}$, the knowledge of $\pi_{pp}$ is required. It can be deduced from the same strategy as for the derivation of the structure equation \eqref{structure}, up to the replacement :
\begin{equation}
\mathcal{A}_s \to \mathcal{A}_s(p):=\sum_{\vec{p}\in\mathbb{Z}^d}\,G^2_s(\vec{p}\,)\delta_{p_1p}\,,
\end{equation}
from which we get:
\begin{equation}
\pi_{pp}=\frac{2\lambda_{41}^r}{1+2\lambda_{41}^r\bar{\mathcal{A}}_s(p)}\,,\quad \bar{\mathcal{A}}_s(p):={\mathcal{A}}_s(p)-{\mathcal{A}}_{-\infty}(0)\,.
\end{equation}
The derivative with respect to $p_1^{2}$ may be easily performed, and from the renormalization condition \eqref{rencond}, we obtain: 
\begin{equation}
\pi_{00}^\prime=-4\lambda_{41}^2(s)\,\mathcal{A}_s^{\prime}\,,
\end{equation}
and the leading order flow equation for $\dot{Z}$ becomes:
\begin{equation}
\dot{Z}=4\lambda_{41}^2 \mathcal{A}_s^{\prime}(0) \,I_2(0)-2\lambda_{41} I_2^\prime(0)\,.
\end{equation}
As announced, a new term appears with respect to the truncated version \eqref{flownew}, which contains a dependence on $\eta$ and then move the critical line. The flow equation for mass may be obtained from \eqref{gamma2} setting $\vec{p}=\vec{0}$ on both sides. Finally, the flow equation for the marginal coupling $\lambda_{41}$ may be obtained from the equation \eqref{Wetterich} deriving it twice with respect to each mean field $M$ and $\bar{M}$. As explained before, it involves $\Gamma^{(6)}_{\text{melo}}$ at leading order, and to close the hierarchy, we use  the marginal coupling as a driving parameter, and express it in terms of $\Gamma_{\text{melo}}^{(4)}$ and $\Gamma^{(2)}_{\text{melo}}$ only. One again, from proposition \ref{propmelons}, $\Gamma^{(6)}_{\text{melo}}$ have to be split into $d$ monocolored components $\Gamma^{(6)\,,i}_{\text{melo}}$:
\begin{equation}
\Gamma^{(6)}_{\text{melo}}=\sum_{i=1}^d \Gamma^{(6)\,,i}_{\text{melo}}\,. 
\end{equation}
The structure equation for $ \Gamma^{(6)\,,i}_{\text{melo}}$ may be deduced following the same strategy as for  $\Gamma^{(4)\,,i}_{\text{melo}}$, from proposition \eqref{propmelons}. Starting from a vacuum diagram, a leading order $4$-point graph may be obtained opening successively two internal tadpole edges, both on the boundary of a common internal face. This internal face corresponds, for the resulting $4$-point diagram to the two external faces of the same colors running through the interior of the diagram. In the same way, a leading order $6$-point graph may be obtained cutting another tadpole edge on this resulting graph, once again on the boundary of one of these two external faces.
The reason this works is that,  in this may, the number of discarded internal faces is optimal, as well as the power counting. From this construction, it is not hard to see that the zero-momenta $ \Gamma^{(6)\,,i}_{\text{melo}}$ vertex function must have the following structure (see \cite{Lahoche:2018oeo}-\cite{Lahoche:2018vun} for more details):
\begin{equation}
 \Gamma^{(6)\,,i}_{\text{melo}}=(3!)^2\,\left(\vcenter{\hbox{\includegraphics[scale=0.5]{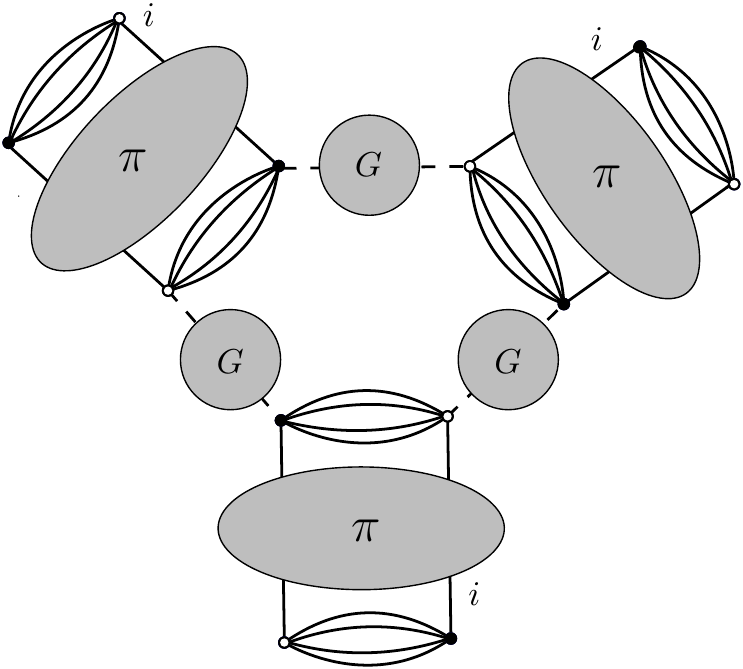} }}\right)\,,
\end{equation}
the combinatorial factor $(3!)^2$ coming from permutation of external edges. Translating the diagram into equation, and taking into account symmetry factors, we get:
\begin{equation}
 \Gamma^{(6)\,,i}_{\text{melo}} = 48Z^3(s)\bar{\lambda}_{41}^3(s)e^{-2 s} \bar{\mathcal{A}}_{2s}\,,
\end{equation}
with:
\begin{equation}
\bar{\mathcal{A}}_{2s}:= Z^{-3}e^{2 s} \sum_{\vec{p}\in\mathbb{Z}^{d-1}} G_s^3(\vec{p}\,)\,.
\end{equation} 
Note that this structure equation may be deduced directly from Ward identities, as pointed-out in \cite{Lahoche:2018vun} and \cite{Samary:2014tja}. The equation closing the hierarchy is then compatible with the constraint coming from unitary invariance. The flow equations involve now some new contributions depending on two sums, $\bar{\mathcal{A}}_{2s}$ and $\bar{\mathcal{A}}_s^{\prime}$, defined without regulation function $\dot{r}_s$. However, they are both power-counting convergent in the UV, and the renormalizability theorem ensures their finitness for all orders in the perturbation theory. For this reason, they becomes independent from the initial conditions at scale $\Lambda$ for $\Lambda\to\infty$; and as pointed out in \cite{Lahoche:2018vun},we get, using the Litim's regulator:
\begin{equation}
\bar{\mathcal{A}}_{2s}=\frac{1}{2}\frac{\pi^2}{1+\bar{m}^{2}}\left[\frac{1}{(1+\bar{m}^{2})^2}+\left(1+\frac{1}{1+\bar{m}^{2}}\right)\right]\,,
\end{equation}
and
\begin{equation}
\bar{\mathcal{A}}_s^{\prime}=\frac{1}{2}\pi^2\frac{1}{1+\bar{m}^{2}}\left(1+\frac{1}{1+\bar{m}^{2}}\right)\,.
\end{equation}
The complete flow equation for zero-momenta $4$-point coupling write explicitly as:
\bea\label{flowfour}
\dot{\Gamma}^{(4)}=-\sum_{\vec p}\dot r_s(\vec p\,) G^2_s(\vec p\,)\Big[\Gamma^{(6)}_{\vec p,\vec 0,\vec 0,\vec p,\vec 0,\vec 0}
\quad-2\sum_{\vec p\,'}\Gamma^{(4)}_{\vec p,\vec 0,\vec p\,',\vec 0}G_s(\vec p\,')\Gamma^{(4)}_{\vec p\,',\vec 0,\vec p,\vec 0}+2 G_s(\vec p\,)[\Gamma^{(4)}_{\vec p,\vec 0,\vec p,\vec 0}]^2\Big].\cr
\eea
Keeping only the melonic contributions, we get finally the following autonomous system  by using  the Litim's regulation:
\begin{align}
\left\{
    \begin{array}{ll}
       \beta_m&=-(2+\eta)\bar{m}^{2}-10\bar{\lambda}_{41}\,\frac{\pi^2}{(1+\bar{m}^{2})^2}\,\left(1+\frac{\eta}{6}\right)\,, \\
       \beta_{41}&=-2\eta \bar{\lambda}_{41}+4\bar{\lambda}_{41}^2 \,\frac{\pi^2}{(1+\bar{m}^{2})^3}\,\left(1+\frac{\eta}{6}\right)\Big[1-\pi^2\bar{\lambda}_{41}\left(\frac{1}{(1+\bar{m}^{2})^2}+\left(1+\frac{1}{1+\bar{m}^{2}}\right)\right)\Big]\,. \label{syst3}
    \end{array}
\right.
\end{align}
where the anomalous dimension is then given by:

\begin{equation}
\eta=4\bar{\lambda}_{41}\pi^2\frac{(1+\bar{m}^{2})^2-\frac{1}{2}\bar{\lambda}_{41}\pi^2(2+\bar{m}^{2})}{(1+\bar{m}^{2})^2\Omega(\bar{\lambda}_{41},\bar{m}^{2})+\frac{(2+\bar{m}^{2})}{3}\bar{\lambda}_{41}^2\pi^4}\,.
\end{equation}
\begin{figure}\begin{center}
\includegraphics[scale=0.4]{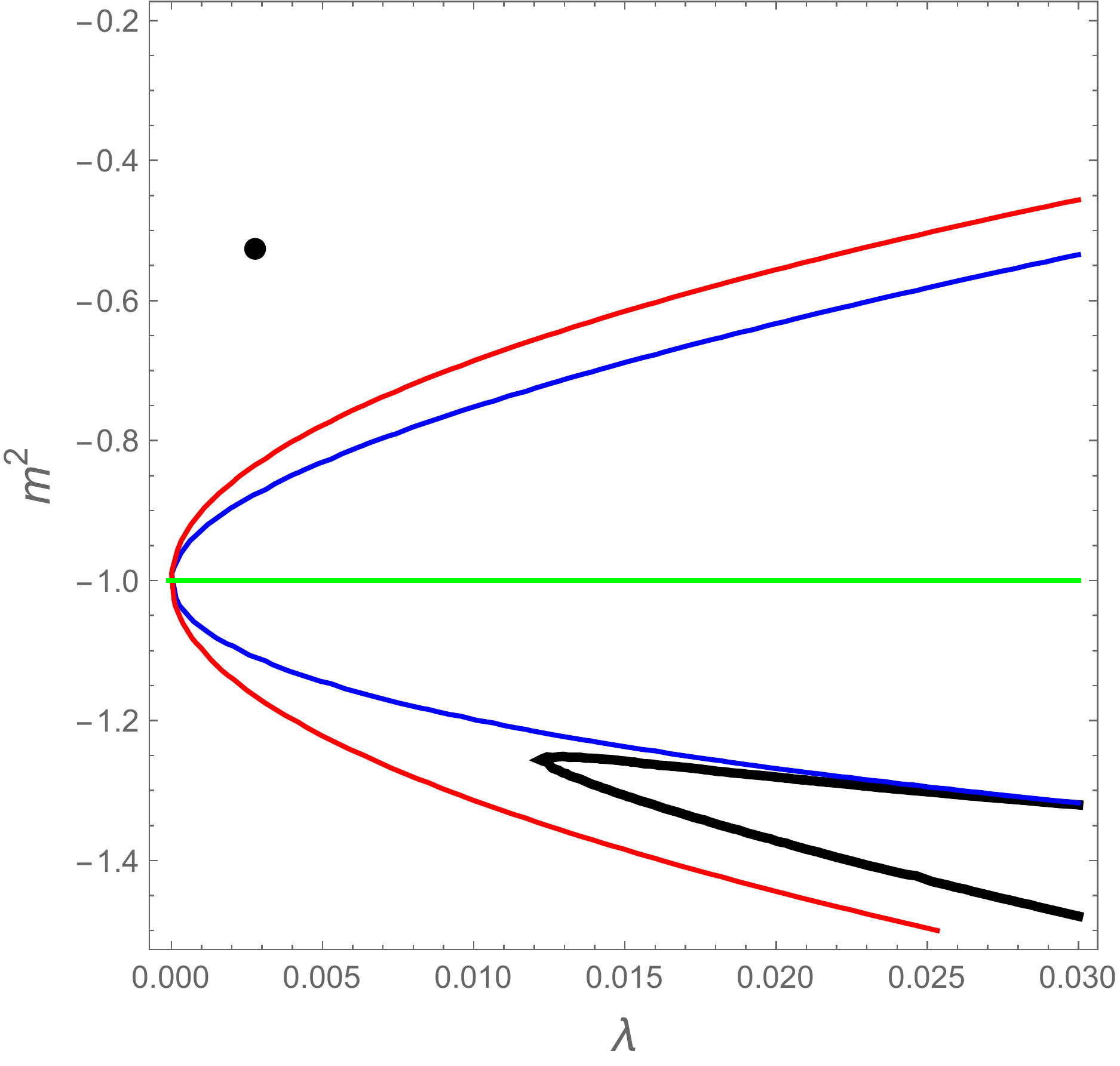} 
\caption{The relevant lines over the maximally extended region $I^\prime$, bounded at the bottom with the singularity line $m^{2}=-1$ (in green). The blue and red curves correspond respectively to the equations $L=0$ and $\Omega=0$. Moreover, the black point correspond to the numerical non-Gaussian fixed point, so far from the two previous physical curves. }\label{fig3}\end{center}
\end{figure}

The new anomalous dimension has two properties which distinguish him from its truncation version. First of all, as announced, the singularity line $\Omega=0$ moves toward the $\bar{\lambda}_{41}$ axis, extending the symmetric phase domain.
In fact, the improvement is \textit{maximal}, the critical line being deported under the singularity line $\bar{m}^{2}=-1$.  In standard interpretations \cite{Lahoche:2018oeo}, the presence of the region $II$ is generally assumed to come from a bad expansion of the effective average action around vanishing means field, becoming a spurious vacuum in this region.

However the EVE method show that the singularity line obtained using truncation is completely discarded taking into account the momentum dependence of the effective vertex. The second improvement come from the fact that the anomalous dimension may be negative, and vanish on the line of equation $L(\bar{\lambda}_{41},\bar{m}^{2})=0$, with:
\beq
L(\bar{\lambda}_{41},\bar{m}^{2}):=(1+\bar{m}^{2})^2-\frac{1}{2}\bar{\lambda}_{41}\pi^2(2+\bar{m}^{2})\,.
\eeq
 Interestingly, there are now two lines in the maximally extended region $I^\prime$ where physical fixed points are expected. However, numerical integrations, show that the improved flow equations admit a non-Gaussian fixed point $\tilde p_+$, which is  numerically very close from the fixed point $p_+$  obtained in the truncation method i.e. $\tilde p_+\approx p_+$, and then unphysical as well. 
The other solutions are:
\bea
p_{0}=(\bar{m}^2=-1.28,\bar{\lambda}_{41}=0.025),\quad p_{1}=(\bar{m}^2=1.96,\bar{\lambda}_{41}=1.10),\quad 
\eea
For $p_0$ we have $\bar{m}^2<-1$. This fixed point cannot be taking into account by considering all the explanation given in the section \eqref{sec2}. $p_1$ have the following critical exponent $\theta_1=-2.8-4.2i$, $\theta_2=-2.8+4.2i$. This fixed point is IR attractive and lives in the same region like $p_+$. Finally all the fixed point discovered from EVE method violate the Ward identities.

\subsection{Exploration of the physical phase space} 
In this section we  will show that the EVE method leads to an alternative first order phase transition scenario,  despite the fact that the fixed point $p_+$ is discarded. In the second time we also prove that this new behavior is only observed using EVE method  and can not be obtained  by implementing the  usual truncation as approximation.\\
{\bf 1})\,\,
Despite the fact that the constraint equation \eqref{contrainte1} is not compatible with the fixed point $p_+=(-0.52,0.0028)$, this is not the end of the history. The constraint $\mathcal{C}=0$ given by equation \eqref{contrainte1} define a one-dimensional subspace, say $\mathcal{E}_\mathcal{C}$ into the whole bi-dimensional phase space $(\bar{\lambda}_{41},\bar{m}^2)$. Obviously, the Ward identity will be violated everywhere except along this one-dimensional subspace $\mathcal{E}_\mathcal{C}$; for this reason we call \textit{physical phase space} this subspace. \\

\noindent 
Solving $\mathcal{C}=0$ with respect to $\bar{m}^2$, we can extract the coupling constant $\bar\lambda_{41}$ as function of the renormalized mass parameter $\bar m^2$. After a few handing computation, we get:
\bea\label{solnew3}
\bar\lambda_{41}^3=0,\quad \text{or}\quad \bar\lambda_{41}= \frac{(m+1)^2 (3 m (m+3)-10)}{\pi ^2 (m (m+7)+2)}:=f(\bar m^2).
\eea 
These solutions provides only one non-trivial parametrized equation for the physical subspace $\mathcal{E}_\mathcal{C}$ : $\bar\lambda_{41}=f(\bar m^2)$. Interestingly, it is not hard to cheek that the presence of the factor $(1+\bar{m}^2)^2$ in the numerator cancel all the formal divergences occurring for $\bar{m}^2=-1$, such that the flow becomes regular at this point. However, other divergences occurs, one of them being common to each beta functions. To understand the structure of the effective flow into the physical subspace, we have to insert the solutions \eqref{solnew3} into the flow equations \eqref{syst3}. However, even to do this, let us discuss the solution \eqref{solnew3} in a few words. Because the theory is asymptotically free, we may expect that $\bar{m}^2$ and $\bar{\lambda}_{41}$ have to vanish simultaneously. What we know is that, in the vicinity of the Gaussian fixed point $\bar{m}^2=\bar{\lambda}_{41}=0$, the constraint $\mathcal{C}=0$ is approximately satisfied. For instance, up to $\bar{\lambda}^3_{41}$ contributions, the equation \eqref{contrainte1} reduces as:
\begin{equation}
\mathcal{C}=\beta_{41}+\eta \bar{\lambda}_{41}=0
\end{equation}
which is identically satisfied from the one-loop beta equation $\beta_{41}=-\eta \bar{\lambda}_{41}$ -- see \eqref{syst3}. As a result, in a small domain around $(\bar{m}^2,\bar{\lambda}_{41})=(0,0)$, the flow behaves approximately according the Ward constraint, but as soon as the flow leaves this region, the Ward constraint is violated, except along the $\mathcal{E}_\mathcal{C}$, where it hold strictly. Note that, for $\bar{m}^2=0$, the value of $\bar{\lambda}_{41}$ is so large ($\bar{\lambda}_{41}\approx 1.9$), and far away from the vicinity of the Gaussian fixed point. \\

\noindent
Now, let us move on to the solutions \eqref{solnew3}. The solution $\bar{\lambda}_{41}=0$ corresponds to trivial flow, $\eta=0$ and:
\bea
\beta_{m}=-2\bar{m}^2,\quad 
\beta_{41}=0\,. 
\eea
On the other hand, inserting the non-trivial solution $\bar\lambda_{41}=f(\bar m^2)$, we get:
\bea
\eta(\bar m^2)=\frac{24 (\bar{m}^2 (\bar{m}^2+7)+2)}{\bar{m}^2 (3 \bar{m}^2 (\bar{m}^2 (\bar{m}^2+6)+1)-56)+68}-6.
\eea
and :
\bea
&&\beta_{m}=\frac{4 (\bar{m}^2 (\bar{m}^2 (3 \bar{m}^2 (\bar{m}^2 (\bar{m}^2+6)-1)-128)-34)+100)}{\bar{m}^2 (3 \bar{m}^2 (\bar{m}^2 (\bar{m}^2+6)+1)-56)+68},
\eea
\bea
\beta_{41}=\frac{4 (\bar{m}^2+1) (10-3 \bar{m}^2 (\bar{m}^2+3))^2 (\bar{m}^2 (\bar{m}^2 (3 \bar{m}^2 (\bar{m}^2 (\bar{m}^2+7)+3)-157)-104)+92)}{\pi ^2 (\bar{m}^2 (\bar{m}^2+7)+2)^2 (\bar{m}^2 (3 \bar{m}^2 (\bar{m}^2 (\bar{m}^2+6)+1)-56)+68)}.\cr
&&
\eea
\begin{center}
\begin{equation*}
\underset{a}{\vcenter{\hbox{\includegraphics[scale=0.35]{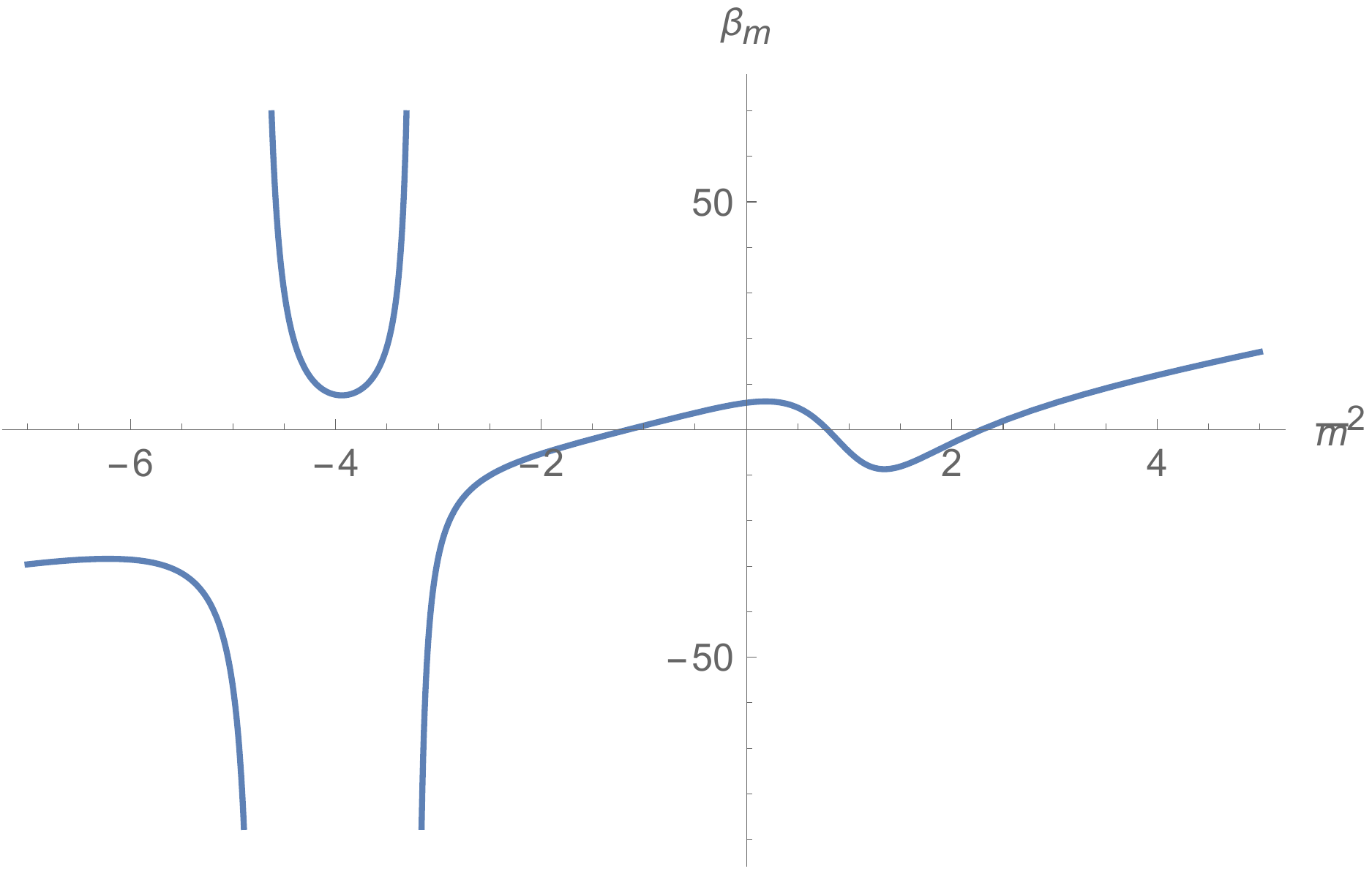}}}} \quad \underset{b}{\vcenter{\hbox{\includegraphics[scale=0.35]{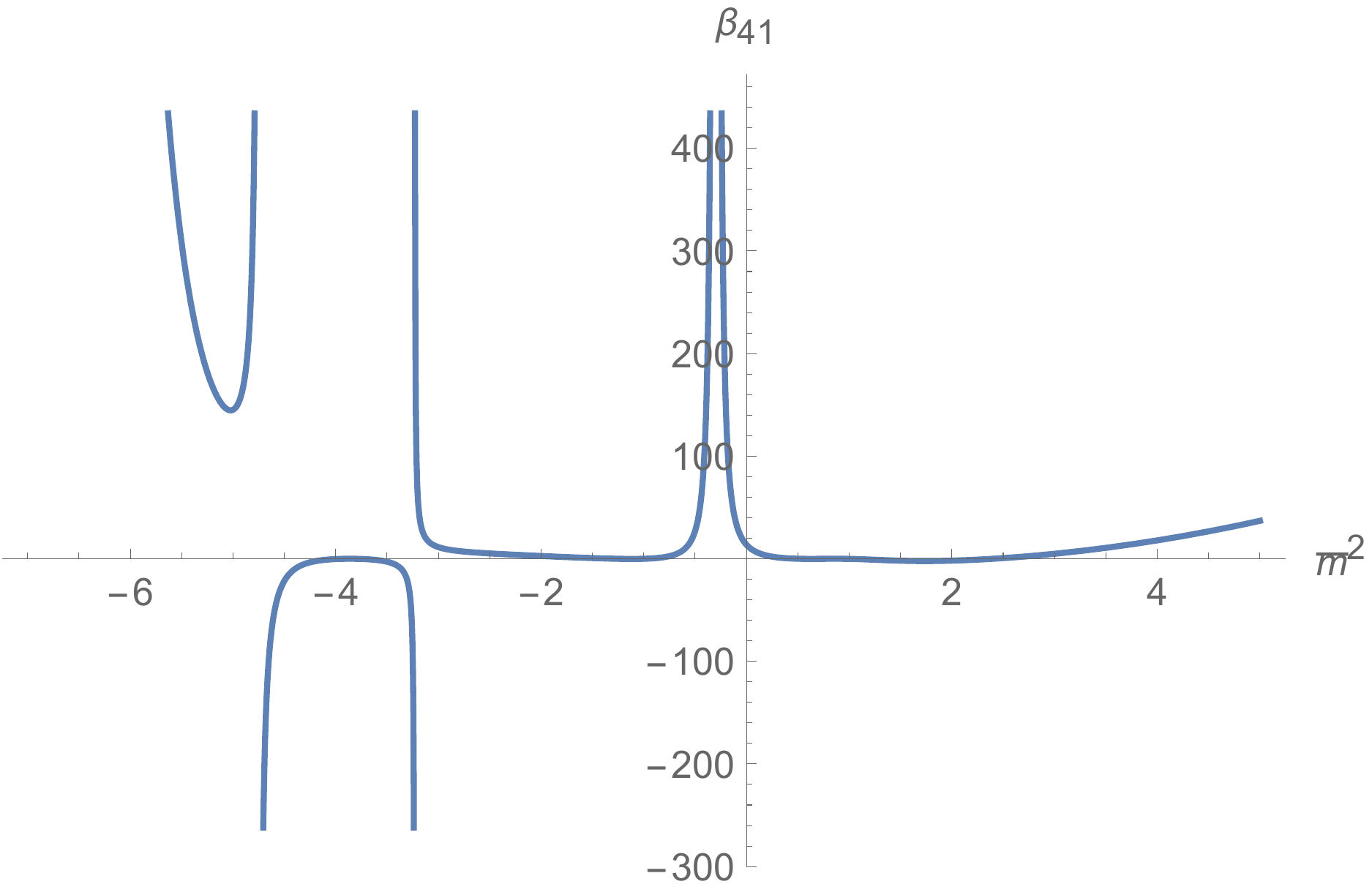}}}} 
\end{equation*}
\captionof{figure}{ (a)\,\,Plot of $\beta_m$ as function of $\bar m^2$ using the constraint equation. We get the singularity at the points $\bar m^2_{\text{div}1}=-4.75$, $\bar m^2_{\text{div}2}=-3.23$ corresponding to the coupling value $\bar \lambda_{\text{div}1}=-2.45$ and $\bar \lambda_{\text{div}2}=-0.38$.  (b)\,\, Plot of $\beta_{41}$ as function of $\bar m^2$ in the physical phase space. The  same singularity  points are identified. Note that the singularity point $\bar m^2_{0}=-0.29$ which appears in the denominator of $\beta_{41}$ do not implies a singularity for $\beta_m$. This point is reminescent of the  first order phase transition in the domain $\bar{m}^2\in]-\infty,-0.29]$. }\label{phase}
\end{center}
As announced, the divergences at the value $\bar{m}^2=-1$ has been discarded. However, some new divergences occurs. First of all, the equation for $\mathcal{E}_\mathcal{C}$ becomes singular for the value $\bar{m}_0^2=-0.29$. This singularity comes from  the denominator of $f(\bar{m}^2)$. Note that $f(\bar{m}^2)$ is such that for the small $\epsilon>0$, $f(\bar{m}_0^2-\epsilon)>0$ and $\epsilon>0$, $f(\bar{m}_0^2+\epsilon)<0$. This singularity is reminescent to the first order phase transition. A second singularity occurs for the values $\bar m^2_{\text{div}1}=-4.75$, $\bar m^2_{\text{div}2}=-3.23$, which is common for $\eta$, $\beta_m$ and $\beta_{41}$. We now discuss this picture. To this end, let us examine the points at which the beta function vanish. We get:
\bea
\beta_m(\bar m_1^2)=0\Rightarrow \bar m_1^2= 2.29,\quad \bar{m}_1^2= -1.14,\quad \bar{m}_1^2 = 0.78
\eea
\bea
\beta_{41}(\bar m_2^2)=0\Rightarrow &&\bar m_2^2=-3.86,\quad \bar{m}_2^2=2.40,\quad \bar{m}_2^2=-1.25,\cr
&&\bar{m}_2^2=0.86,  \quad\bar{m}_2^2=0.51,\quad \bar m_2^2=-1.
\eea
Because $\bar m_1^2\neq \bar m_2^2$, we recover our previous conclusion, in the whole theory space $(\bar{\lambda},\bar{m}^2)$, no fixed point can be found using the exact FRG with the EVE method taking into account the Ward constraint. Finally at the point $\bar m_0^2\approx -0.29$ on the projected phase space $\mathcal{E}_{\mathcal{C}}$, the discontinuity of $f(\bar{m}^2)$ implies the discontinuity of the effective action $\Gamma_s$. The flow into the physical phase space change the direction at this point, pointing toward positive mass direction for $\bar{m}^2>\bar m_0^2$ and toward the negative mass direction for $\bar{m}^2<\bar m_0^2$. In the last case, the flow continues on this way and reaches the singularity, where the flow becomes undefined. Both, these two features are reminiscent of a first order phase transition
on the physical phase space -- the singularity may indicate
a point at which the effective action becomes undefined, or where the expansion around the null vacuum fails to exist – the last statement having to
be rigorously investigated.

\noindent
The same analysis may be performed when we consider the following prescription:  by extracting the mass parameter $\bar m^2$ as function of the constant $\bar\lambda_{41}$: $(\bar m^2=g(\bar\lambda_{41}))$ in the constraint equation and solve the $\beta$-function of the coupling. In this case, the coupling becomes the parameter, and for the points $\lambda_{\text{div}1}=-2.45$ and $\bar \lambda_{\text{div}2}=-0.38$ we get a singularity corresponding to the values $\bar m^2_{\text{div}1}=-4.75$, $\bar m^2_{\text{div}2}=-3.23$ (see Figure  \eqref{phase}b).  Note that around $\bar{m}_0^2$, the coupling becomes very small :
\begin{equation}
f(\bar{m}^2_0)\approx 0.0077\,,
\end{equation}
and we reach a new perturbative regime for small $\bar{\lambda}_{41}$ and small $(1+\bar{m}^2)$. \\
{\bf 2})\, When we investigated the truncation method, we do not performed such a discussion. To compare the methods, let us consider the same strategy for the phase space described with the truncation method. Solving the constraint $\mathcal{C}=0$, we get:
\bea
\bar \lambda_{41}^3=0 \mbox{ or }  \bar \lambda_{41}=\frac{11 (1 + \bar m^2)^2}{5 \pi^2}.
\eea
By replacing this solution $\bar \lambda_{41}^3=0 $ in the flow equations of mass and coupling \eqref{flownew} we get
\bea
\beta_m=-2\bar m^2,\quad \beta_{41}=0.
\eea
Now setting $\beta_m=0=\beta_{41}$, only the Gaussian fixed point $(\bar m^*=0,\bar \lambda^*_{41}=0)$ survives. Also the last solution leads to
\bea
\beta_m=\frac{4}{9} (12 \bar{m}^2+11),\quad  \beta_{41}=\frac{484 (\bar{m}^2+1) (15 \bar{m}^2+13)}{225 \pi ^2}.
\eea
One more time, we recover that no solutions such that $\beta_m=0=\beta_{41}$ exist. Moreover, we recover that $\beta_m$ vanish for a negative mass value, not so far from $\bar{m}^2=-1$; and that the singularity at this value has been completely discarded from the solution of the Ward constraint. However, the common singularity of the beta functions as some other aspects of the previous flow equations are not reproduced in the truncation framework. The nature of the singularities, for $\bar m^2_{\text{div}1}=-4.75$, $\bar m^2_{\text{div}2}=-3.23$ remains mysterious in our formalism. Obviously they are a consequence of the improvement coming from the EVE method, and their understanding may be increasing our knowledge about the behavior of the TGFT renormalization group flow.

\section{Conclusion}\label{sec5}
In this manuscript we have studied with different methods the FRG applied to TGFT. 
First we have derived the Wilson-Polchinski  equation and given the perturbative solution. In the second time we derived the Wetterich flow equation using the usual approximation called truncation. The analytic solution of this equation is given. We get a fixed point denoted by $p_+$. Then we investigated the Ward identities as a new constraint along the flow and showed that the fixed point $p_+$ violates this constraint. Finally we improve the study of FRG by replacing the truncation method by the so called EVE. The flow equation is improved and the corresponding solution $\tilde p_+$ is not so far from $p_+$ i.e. $\tilde p_+\approx p_+$ . However, the Ward identities are strongly violated
at this fixed point and therefore this unique fixed point seems to be unphysical.  We have also showed the importance of EVE method in the sense that, despite the fact that the fixed point $p_+$ needs to be discarded, a first order phase transition exists so far from this point in the subspace $\mathcal{E}_C$ of the theory space. We have showed that this new behavior  can not be observed using the truncation as approximation. \\

In this review we focus on the EVE method for the melonic approximation, and especially on the quartic melonic just-renormalizable sector. The complete quartic sector, including all the connected quartic bubbles has already been considered in a complementary work \cite{Lahoche:2018oeo}, and the conclusion about the incompatibility with nonperturbative fixed points and Ward identities hold. The graphs added to the quartic melonic ones to complete the quartic sector have been called \textit{pseudo-melons} due to the similarities of their respective leading order Feynman graphs. Finally, even if we expect that some aspects of the EVE method improve the standard truncation method, some limitations have to be addressed for future works. In particular our investigations
are limited on the symmetric phase, ensuring convergence of any expansion around
vanishing classical means field. Moreover, we have retained only the first terms in the derivative
expansion of the 2-point function and only considered the local potential approximation,
i.e. potentials which can be expanded as an infinite sum of connected melonic (and pseudo-melonic) interactions. Finally, a rigorous investigation of the behavior of the renormalization group flow into the physical phase space has to be addressed in the continuation of current works on this topic.

\end{document}